\tikzstyle{internal} = [draw, fill, shape=circle, text=black, inner sep=4pt]
\tikzstyle{external} = [shape=circle, draw]
\tikzstyle{square}   = [draw, fill, rectangle, inner sep=5pt]
\tikzset{lnode/.style = {
    circle, 
    draw=cyan!30!black, 
    thick,
    inner sep=1.5pt,
    minimum size=15pt } }
\tikzset{uedge/.style = {
    draw=cyan!20!black, 
    very thick} }
\DeclareMathOperator\comp{\mathcal{K}}
\DeclareMathOperator*{\argmin}{arg\,min}
\newcommand{\bR}{\mathbb{R}}
\newcommand{\newfontobj}[2]{
\newcommand{#1}[1]{
    \expandafter\def\csname##1\endcsname{{#2 ##1}}}}
\newfontobj{\class}{\rm} 
\newcommand{\hitting}{\textsc{Hitting Set}\xspace}
\newcommand{\mwc}{\textsc{Multiway Cut}\xspace}
\newcommand{\mc}{\textsc{Multicut}\xspace}
\newcommand{\uml}{\textsc{Uniform Metric Labeling}\xspace}
\newcommand{\kcut}{$k$\textsc{-Cut}\xspace}
\newcommand{\lifted}{\textsc{Lifted Cut}\xspace}
\newcommand{\TiTj}{\ensuremath{T_i-T_j}\xspace}
\newcommand{\nTiTj}{\textsc{All-to-All}\xspace}
\newcommand{\tiTj}{\ensuremath{t_i-T_j}\xspace}
\newcommand{\ntiTj}{\textsc{Single-to-All}\xspace}
\newcommand{\titj}{\ensuremath{t_i-t_j}\xspace}
\newcommand{\ntitj}{\textsc{Single-to-Single}\xspace}
\newcommand{\tijTj}{\ensuremath{t_i^j-T_j}\xspace}
\newcommand{\ntijTj}{\textsc{Some-to-All}\xspace}
\newcommand{\tijtj}{\ensuremath{t_i^j-t_j}\xspace}
\newcommand{\ntijtj}{\textsc{Some-to-Single}\xspace}
\newcommand{\tijtji}{\ensuremath{t_i^j-t_j^i}\xspace}
\newcommand{\ntijtji}{\textsc{Some-to-Some}\xspace}
\newcommand{\stj}{\ensuremath{s-t_j}\xspace}
\newcommand{\nstj}{\textsc{Fixed-to-Single}\xspace}
\title{Multiway Cuts with a Choice of Representatives} 
\author{Kristóf Bérczi}{MTA-ELTE Matroid Optimization Research Group and HUN-REN{--}ELTE Egerváry Research Group on Combinatorial Optimization, Department of Operations Research, Eötvös Loránd University, Budapest, Hungary}{kristof.berczi@ttk.elte.hu}{https://orcid.org/0000-0003-0457-4573}{}
\author{Tamás Király}{HUN-REN{--}ELTE Egerváry Research Group on Combinatorial Optimization, Department of Operations Research, Eötvös Loránd University, Budapest, Hungary}{tamas.kiraly@ttk.elte.hu}{[https://orcid.org/0000-0001-7218-2112]}{}
\author{Daniel P. Szabo}{Department of Operations Research, Eötvös Loránd University, Budapest, Hungary}{dszabo2@wisc.edu}{https://orcid.org/0009-0009-7263-1614}{}
\authorrunning{K. Bérczi, T. Király, and D. P. Szabo} 
\keywords{Approximation algorithms, Multiway cut, CKR relaxation, Steiner multicut} 
\begin{document}

\maketitle

\begin{abstract}
In the \mwc problem, we are given an undirected graph with nonnegative edge weights and a subset of $k$ terminals, and the goal is to determine a set of edges of minimum total weight whose removal disconnects each terminal from the rest. 
The problem is APX-hard for $k\geq 3$, and an extensive line of research has concentrated on closing the gap between the best upper and lower bounds for approximability and inapproximability, respectively.

In this paper, we study several generalizations of \mwc where the terminals can be chosen as \emph{representatives} from sets of \emph{candidates} $T_1,\ldots,T_q$. 
In this setting, one is allowed to choose these representatives so that the minimum-weight cut separating these sets \emph{via their representatives} is as small as possible. 
We distinguish different cases depending on (A) whether the representative of a candidate set has to be separated from the other candidate sets completely or only from the representatives, and (B) whether there is a single representative for each candidate set or the choice of representative is independent for each pair of candidate sets.

For fixed $q$, we give approximation algorithms for each of these problems that match the best known approximation guarantee for \mwc. Our technical contribution is a new extension of the CKR relaxation that preserves approximation guarantees.
For general $q$, we show $o(\log q)$-inapproximability for all cases where the choice of representatives may depend on the pair of candidate sets, as well as for the case where the goal is to separate a fixed node from a single representative from each candidate set. As a positive result, we give a $2$-approximation algorithm for the case where we need to choose a single representative from each candidate set. 
This is a generalization of the $(2-2/k)$-approximation for \kcut, and we can solve it by relating the tree case to optimization over a gammoid.

\end{abstract}

\section{Introduction}
\label{sec:intro}

For an undirected graph $ G=(V,E) $ with weight function $w:E\to \bR_+$, the \mwc problem asks for a minimum-weight cut $ C\subseteq E $ separating any pair of terminals in a given terminal set $ S=\{s_1,\dots s_k\} $. As cuts can be identified with partitions of the nodes, this is equivalent to finding a node coloring of $ G $ with $ k $ colors such that terminal $ s_i $ is colored with color $ i $ for $i\in[k]$, and we seek to minimize the total weight of dichromatic edges.

\subsection{Previous work}
Dahlhaus et al. \cite{Dahlhausetal} showed that \mwc is NP-hard even for $ k=3 $, and 
provided a very simple combinatorial $ (2-2/k) $-approximation that works as follows. For each $s_i$, determine a minimum-weight cut $C_i\subseteq E$ that separates $s_i$ from all other $s_j$ for $j\neq i$ -- such a cut is called an \emph{isolating} cut of $s_i$ -- and then take the union of the $ k-1 $ smallest ones among the $k$ cuts thus obtained. In an optimal multiway cut, the boundary of the component containing $s_i$ is a cut isolating $s_i$, hence its weight is at least as large as that of $C_i$. Summing up these inequalities for all but the largest isolating cuts, since this counts each edge at most twice except for the boundary of the largest one, leads to a $ (2-2/k) $-approximation.

Since the pioneering work of Dahlhaus et al., \mwc has been  a central problem in combinatorial optimization. The best known approximability as well as inapproximability bounds are based on a geometric relaxation called the \emph{CKR relaxation}, introduced by C\v{a}linescu, Karloff and Rabani~\cite{CKR}. The current best approximation algorithm is due to Sharma and Vondr\'ak~\cite{SharmaVondrak} with an approximation factor of $ 1.2965 $, while the best known lower bound (assuming the Unique Games Conjecture) is slightly above $ 1.2 $ \cite{IntGap}.

Various generalizations of \mwc have been introduced. In the \mc problem, we are given an undirected graph with non-negative edge weights, together with a demand graph consisting of edges $ (s_1,t_1),\ldots, (s_k,t_k)$, and the goal is to determine a minimum-weight cut whose removal disconnects each $ s_i $ from its pair $ t_i $. \mc is NP-hard to approximate within any constant factor assuming the Unique Games Conjecture \cite{chawla2006hardness}, and there is a polynomial-time $O(\log k)$-approximation algorithm \cite{garg1996approximate}. The \uml problem takes as input a list of possible colors for each node in an edge-weighted graph, and asks for a coloring that respects these lists with the minimum total weight of dichromatic edges; 
\mwc arises as a special case when the terminals have distinct lists of length $ 1 $ and all other nodes can be colored arbitrarily. Kleinberg and Tardos~\cite{KleinbergTardos} gave a 2-approximation to \uml with a tight integrality gap using a geometric relaxation, similar to that of CKR. 
In the \kcut problem, we are given only an edge-weighted graph $G$ and a positive integer $k$, and the goal is to find a minimum-weight cut whose deletion breaks the graph into $k$ components. One can think of this problem as a version of \mwc where the terminals can be chosen freely. The \kcut problem admits a $2$-approximation~\cite{kcut2apx} that is tight~\cite{kcut}. 
The \textsc{Steiner Multicut} \cite{Steiner} problem takes as input an undirected graph $G$ and subsets $X_1,X_2,\ldots, X_q$ of nodes, and asks for a minimum cut such that each $X_i$ is separated into at least $2$ components. A generalization of \textsc{Steiner Multicut} is the \textsc{Requirement Cut} problem\cite{RequirementCut}, where requirements $r_i$ are given for each set $X_i$, and the goal is to find the minimum cut that cuts each $X_i$ into at least $r_i$ components. The current best algorithms for \textsc{Requirement Cut} are those given in \cite{RequirementCut, Req2}, of which we will use the $O(\log k \log q)$ approximation, where $k=|\bigcup_{i=1}^q X_i|\leq n$.

\subsection{Our results}
We introduce generalizations of \mwc, where we are allowed to choose \emph{representatives} from some terminal candidate sets $T_1,\ldots,T_q\subseteq V$, and the goal is to find the minimum-weight cut separating these sets \emph{via their representatives}. The variants are distinguished by (A) whether the representative has to be separated from all candidates of the other candidate sets or only from their representatives, and (B) whether there is a single representative for each candidate set or whether the choice of representative is independent for each pair of candidate sets.
In order to make it easier to distinguish these problems, we use the following naming rules. 

\begin{itemize}
    \item When the goal is to separate \emph{all} candidates, we use \textsc{All}; for example, the \nTiTj problem requires all nodes of $T_i$ to be separated from all nodes of $T_j$, for each $i \neq j$. 
    \item When the goal is to choose a \emph{single} representative for each candidate set, we use \textsc{Single}, and we denote the chosen representative of $T_i$ by $t_i$. For example, the \ntitj problem requires choosing a representative $t_i \in T_i$ for every $i\in [q]$, and finding a cut that separates $t_i$ from $t_j$ for all $i \neq j$. On the other hand, \ntiTj requires the chosen representative $t_i \in T_i$ to be separated from every node of $T_j$, for all $i \neq j$.
    
    \item When only \emph{some} representative of $T_i$ ought to be separated from some part of $T_j$ for each $i,j$ pair, we use \textsc{Some}, and denote the representative chosen from $T_i$ to be separated from $T_j$ by $t^j_i$. For example, the {\ntijtji} problem asks for a minimum-weight subset of edges such that after deleting these edges, for any pair $i\neq j$, there are nodes $t^j_i\in T_i$ and $t^i_j\in T_j$ that are in different components.

    \item When there is a \emph{fixed} node that needs to be separated from the candidate sets, we  use \textsc{Fixed}, and denote the fixed node by $s$. In the \nstj problem, we are given a fixed node $s$, and we want a minimum-weight subset of edges such that after deleting these edges, $s$ is separated from at least one element $t_j \in T_j$ for every $j\in [q]$.
\end{itemize}

These problems are natural generalizations of \mwc that provide various ways to interpolate between problems with fixed terminals like \mwc and problems with freely chosen terminals like \kcut. Although, as we will discuss later, some of our problems are equivalent or closely related to problems that have already been considered in the literature, a systematic study of this type of generalization has not yet been done, and some of our results (Theorem \ref{thm:lift}, Theorem \ref{thm:ti-tj}) require new observations and techniques.

In each problem, we want to minimize over all possible choices of representatives, as well as over all possible subsets of edges. The problem where we need to separate each candidate set from every other, {\nTiTj}, is equivalent to \mwc by contracting each candidate set to a single node. The other problems are not directly reducible to \mwc.  
We denote by $ \alpha\approx 1.2965 $ the current best approximation factor for \mwc \cite{SharmaVondrak}. The different problems, as well as our results, are summarized in Table \ref{tab}. The main results that require new techniques are indicated in bold in the table, and are discussed in the next subsection.
\begin{table}[!tbh]
\renewcommand{\arraystretch}{1.2} 
\begin{center}
    \begin{tabular}
        {@{}c@{\quad}c@{\quad}r@{\quad}r@{}}
        Problem&Demands & Fixed $ q $ & Unbounded $ q $ \\
        \toprule
        {\nTiTj}&{\TiTj}&$ \alpha $-approx&$ \alpha $-approx\\
        {\ntiTj}&{\tiTj}&\textbf{$ \alpha $-approx}&$ 2 $-approx\\
        {\ntitj}&{\titj}&$ \alpha $-approx&\textbf{Tight $ 2 $-approx}\\
        {\nstj}&{\stj}&In P&No $ o(\log q) $ approx\\
        {\ntijtj}&{\tijtj}&$ \alpha $-approx&No $ o(\log q) $ approx\\
        {\ntijtji}&{\tijtji}&$ \alpha $-approx&$ O(\log q \cdot \log n) $ approx \cite{RequirementCut}\\
        {\ntijTj}&{\tijTj}&\textbf{$ \alpha $-approx}&No  $ o(\log q) $ approx\\[2pt]\\
    \end{tabular}
    \caption{A summary of our results, where $ \alpha\approx 1.2965 $ \cite{SharmaVondrak} is the current best approximation factor for \mwc. The tightness of $2$-approximation assumes SSEH, 
    while the other inapproximability results hold assuming $\mathrm{P}\neq \mathrm{NP}$. The main results are highlighted in bold.}\label{tab}
    \vspace{-1cm}
\end{center}
\end{table}

\subsection{Techniques}

\subparagraph*{Approximation when $q$ is part of the input.} We give $ 2 $-approximations for {\ntiTj} and {\ntitj}. For the latter, we first give an exact algorithm on trees, by showing that the feasible solutions have a gammoid structure. This then leads to a $ 2 $-approximation for general graphs using the Gomory-Hu tree, which is best possible, since {\ntitj} generalizes the {\kcut} problem. Also, we show that the {\ntijtji} problem is equivalent to \textsc{Steiner Multicut}, leading to an $O(\log q \cdot \log n)$  approximation in this case.

\subparagraph*{Approximation for fixed $q$.} Some of the problems with fixed $q$ are directly reducible to solving a polynomial number of {\mwc} instances. However, this is not the case for {\ntiTj} and {\ntijTj}. Our $\alpha$-approximation algorithms for these are obtained by extending the CKR relaxation to a more general problem that we call {\lifted} (see Section~\ref{sec:lifted}) in such a way that the rounding methods used in \cite{SharmaVondrak} still give an $\alpha$-approximation. \lifted may have independent interest as a class of metric labeling problems that is broader than \mwc but can still be approximated to the same ratio. We then show that for fixed $q$, problems {\ntiTj} and {\ntijTj} are reducible to solving polynomially many instances of {\lifted}. 

\subparagraph*{Hardness of approximation.} We prove hardness of {\nstj} by reducing from \hitting. We then reduce {\ntijTj}, {\ntijtj}, and {\ntijtji} from {\nstj} to give hardness results for those problems as well.

\subsection{Structure of the paper} In Section \ref{sec:background}, we present the main tools used in our algorithms and proofs. Section \ref{sec:lifted} introduces the \lifted problem and describes how to extend the $\alpha$-approximation of \cite{SharmaVondrak} to \lifted. The remaining sections present the results for the problems listed in Table \ref{tab}.

\section{Background}
\label{sec:background}

Throughout the paper, we denote the set of non-negative reals by $\bR_+$, and use $[k]=\{1,\dots,k\}$. We use $e^i$ to denote $i$th elementary vector, and $\Delta_k$ denotes the convex hull of $\{e^1,\dots,e^k\}$, that is, $\Delta_k=\{x \in \mathbb{R}^k: x \geq 0,\ \sum_{i=1}^k x_i=1\}$.

Given an undirected graph $G=(V,E)$, the edge going between nodes $u,v\in V$ is denoted by $(u,v)$. For a weight function $w:E\to\bR_+$ and $C\subseteq E$, we use $w(C)=\sum_{e\in C} w(e)$. The graph obtained by deleting the edges in $C$ is denoted by $G-C$. We denote the set of components of $G$ by $\comp(G)$. The boundary of a given subset of nodes $S\subseteq V$ is $\delta(S)=\{(u,v)\in E: u\in S, v\in V\setminus S\}$. 

We briefly summarize the background results that we build upon in our proofs.

\subsection{The CKR Relaxation and Rounding Methods}

For a graph $G=(V,E)$ with edge weights $w:E\to\bR_+$ and terminals $S=\{s_1,\dots,s_k\}$, the CKR relaxation \cite{CKR} is the following linear program (CKR-LP) which assigns to each node $u \in V$ a geometric location $x^u$ in the $ k $-dimensional simplex.

\vspace{-1em}
\begin{alignat*}{2} 
    &\text{minimize} & & \sum_{(u,v)\in E} w_{u,v}\|x^u - x^v\|_1 \notag\\
    & \text{subject to}& \quad & \begin{aligned}[t]
        x^u & \in \Delta_k& u & \in V,\\
        x^{s_i} & = e^i& i & \in[k].
    \end{aligned} \tag{CKR-LP}
\end{alignat*}

  The original paper of C\v{a}linescu, Karloff and Rabani~\cite{CKR} gives a $ (3/2-1/k) $-approximation algorithm that works as follows. First take a threshold $ \rho_i\in (0,1) $ uniformly at random for each dimension $i\in[k]$. Then take one of the two permutations $ \sigma=(1,\ldots, k-1, k) $ and $ (k-1, k-2,\ldots, 1,k) $ of the terminals at random (that is, with probability $ 1/2 $), assign nodes within a distance $ \rho_{\sigma(i)} $ of $ x^{s_{\sigma(i)}} $ to the component of $ s_{\sigma(i)} $ for $ i\in[k-1]$, and assign the remaining nodes to $ s_k $. We call an algorithm that chooses a permutation of the terminals and then assigns the nodes within some threshold to the terminals in that order a \emph{threshold algorithm}. The analyses of the above linear programming formulation revealed several useful properties of the CKR relaxation. One of these observations is that the edges of the graph may be assumed to be \emph{axis-aligned}. An edge $u,v$ is said to be $(i,j)$-axis-aligned if $x^u$ and $x^v$ differ only in coordinates $i$ and $j$. Roughly speaking, any edge that is not axis-aligned can be subdivided into several edges that are axis-aligned, forming a piecewise linear path between $x^u$ and $x^v$. This observation significantly simplifies the analysis of threshold algorithms, as there are at most two thresholds that can cut any axis-aligned edge. Another useful property is \emph{symmetry}. For any threshold algorithm, there is one that achieves the same guarantees by choosing a uniformly random permutation. See \cite[Section 2]{Karger} for a more detailed discussion of these properties.

Another way of rounding the CKR relaxation is provided by the \emph{exponential clocks} algorithm of Buchbinder, Naor and Schwartz~\cite{buchbinder}. Their approach can be thought of 
as choosing a uniformly random point in the simplex, and splitting the simplex by axis parallel hyperplanes that meet at this given point. The algorithm gives the same guarantees as the algorithm of Kleinberg and Tardos~\cite{KleinbergTardos} for \uml. This latter problem takes as input a list of possible colors $\ell(v)$ for each node $v$ in a given graph, and asks for a coloring that respects these lists with the minimum total weight of dichromatic edges. Their relaxation (UML-LP) is similar to the CKR relaxation when there are a total of $q$ colors, but it does not require there to be nodes at every vertex of the simplex.

\vspace{-1em}
\begin{alignat*}{2}
    & \text{minimize} & & \sum_{(u,v)\in E} w_{u,v}\|x^u - x^v\|_1 \\
    & \text{subject to}& \quad & \begin{aligned}[t]
        x^u & \in \Delta_{q}& u & \in V,\\
        x^{v}_i & = 0& i & \notin \ell(v).
    \end{aligned} \tag{UML-LP}
\end{alignat*}

\begin{algorithm}[t!]
    \caption{The Kleinberg-Tardos Algorithm for \uml.}\label{alg:kleinberg}
    \begin{algorithmic}[1]
    \Statex \textbf{Input:} A graph $G=(V,E)$, weights $w:E\to\bR_+$, labels $\ell:V\to \mathcal{P}([q])$, and an LP solution $x^u$ for each $u\in V$.
    \Statex \textbf{Output:} A solution to \uml.
        \While{$\exists u\in V$ s.t. $u$ is unassigned}
        \State Pick a label $i\in [q]$ uniformly at random, and a threshold $\rho\sim unif[0,1]$.
        \State Assign label $i$ to any unassigned $u\in V$ with $x^u_i\geq \rho$.
        \EndWhile
    \end{algorithmic}
\end{algorithm}

It is shown in \cite[Section 6]{buchbinder} that Algorithm \ref{alg:kleinberg} gives the same guarantees as the exponential clocks algorithm.

The approximation algorithm of Sharma and Vondr\'ak~\cite{SharmaVondrak} for \mwc randomly chooses between four different algorithms of the above two types with some careful analysis to achieve an $ \alpha $-approximation, where $ \alpha \approx 1.2965 $.

\subsection{Other Relevant Tools}

Our hardness of approximation results are based on two different complexity assumptions. The $o(\log q)$ inapproximability results hold assuming $\mathrm{P} \neq \mathrm{NP}$, based on the hardness of approximating \hitting proved by Dinur and Steurer\cite{Dinur}. The other complexity assumption that we use is the \emph{Small Set Expansion Hypothesis} (SSEH), a core hypothesis for proving hardness of approximation for problems that do not have straightforward proofs assuming the Unique Games Conjecture (UGC). It implies the UGC, and we will use it as evidence against a $(2-\varepsilon)$-approximation, for any $\varepsilon>0$, for \kcut~\cite{kcut}. For completeness, we include the relevant theorems here.

\begin{theorem}[\cite{Dinur,moshkovitz}] For any fixed $0<\alpha<1$, 
    \hitting cannot be approximated in polynomial time within a factor of $(1-\alpha)\ln N$ on inputs of size $N$, unless $\mathrm{P}=\mathrm{NP}$.
\end{theorem}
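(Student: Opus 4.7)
The plan is to reconstruct the two-stage proof pipeline used by Dinur--Steurer (with the Moshkovitz--Raz PCP as the starting point), adapted to \hitting via the standard equivalence with \textsc{Set Cover}. Throughout, I think of a \hitting instance as a bipartite incidence between elements and sets, where one picks elements to hit every set; the equivalent \textsc{Set Cover} instance swaps the two sides.

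The first step is to fix a two-prover one-round \lab instance as the source of hardness. Concretely, I would invoke the Moshkovitz--Raz near-linear-size low-error PCP: there is a reduction from \sat on $n$-variable formulas to \lab instances with alphabet size $|\Sigma|$, soundness $s$, and size $N_0$ such that $s = 1/|\Sigma|^{\Omega(1)}$ and $|\Sigma|$ may be taken as a sufficiently slowly growing function of a parameter $k$ that I will tune later. The point of choosing parameters this way is that soundness must vanish faster than the inverse of any constant, so that the eventual hitting-set gap of $(1-\alpha)\ln N$ is achievable.

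The second step is the gadget construction. For an integer $k$ I would build a $(k,\ell)$-partition system on a ground set $U$ of size roughly $|\Sigma|^{O(1)}$: this is a family of $\ell$ partitions $\mathcal{P}_1,\ldots,\mathcal{P}_\ell$ of $U$, each into $k$ parts, with the property that no union of parts drawn from fewer than $(1-\alpha')\ln k$ distinct partitions covers $U$. Standard probabilistic constructions yield such systems with $|U|=\poly(\ell,k)$. I now transform the \lab instance $((U_L, U_R, E), \Sigma, \Pi)$ into a \textsc{Set Cover} instance: create a fresh partition system at every right-vertex $v \in U_R$; for each left-vertex $u\in U_L$ and each candidate label $\sigma \in \Sigma$, form one big set by taking, for every edge $(u,v)$ with constraint $\pi_{u,v}$, the part of $\mathcal{P}_{u,v}$ corresponding to the label of $v$ forced by $\pi_{u,v}(\sigma)$. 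Dualizing back gives the desired \hitting instance. The final number of elements/sets is $N = \poly(N_0,k)$.

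The third step is the two-sided analysis. For completeness, a satisfying \lab assignment selects, for each $u$, a single $(u,\sigma)$-set, and these together cover every partition system by construction; this gives a solution of size $|U_L|$. For soundness, I would argue contrapositively: any hitting set of size less than $(1-\alpha')(\ln k) |U_L|$ must, by averaging, leave many right-vertices covered by parts coming from fewer than $(1-\alpha')\ln k$ distinct $(u,\sigma)$-choices; decoding a uniformly random such choice at each left-vertex yields a labeling that satisfies a fraction of constraints well above the soundness $s$ of the PCP, contradicting the guarantee. Tuning $k$ as a growing function of $N_0$ (so that $\ln k = (1-o(1))\ln N$) and choosing $\alpha' < \alpha$ then converts the multiplicative gap into the claimed $(1-\alpha)\ln N$ inapproximability.

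The main obstacle is the soundness analysis: matching constants between the partition-system cover threshold and the PCP soundness requires careful averaging over the structure of the reduction, and in particular controlling how the hitting set distributes across the many partition systems. The Moshkovitz--Raz step is invoked as a black box, and the partition-system construction is standard, but coupling them so that the final gap is exactly $(1-\alpha)\ln N$ (rather than some weaker $\Theta(\ln N)$) is where the bulk of the technical work lies.
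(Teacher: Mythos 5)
The paper states this theorem purely as a citation to \cite{Dinur,moshkovitz} and does not prove it (it is included ``for completeness''), so there is no argument in the paper against which your attempt can be compared. Your sketch reconstructs the Feige/Moshkovitz pipeline that underlies one of the two cited routes: reduce from \sat to a near-linear-size projection \lab instance with subconstant soundness via the Moshkovitz--Raz PCP, attach Feige-style partition-system gadgets at the right-vertices to get a \textsc{Set Cover} instance, and transpose the incidence structure to obtain \hitting. That high-level architecture is correct, and Dinur--Steurer's contribution is essentially to supply an alternative PCP step that achieves the required parameters under $\mathrm{P}\neq\mathrm{NP}$.

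There is, however, a genuine gap beyond the parts you explicitly black-box. Your parameter accounting slips at the decisive point: the covering gap of a partition system on ground set $U$ scales with $\ln|U|$, not with $\ln k$ where $k$ is the number of parts per partition, and the final ratio $(1-\alpha)\ln N$ comes from arranging $\ln|U| = (1-o(1))\ln N$. This is exactly why a near-linear-size PCP is indispensable: the partition systems must dominate the size of the whole instance, which is impossible if the \lab instance already has polynomial blow-up. You also defer the soundness step -- decoding a labeling from a small hitting set and contradicting the PCP soundness -- but that averaging argument, together with the parameter coupling above, \emph{is} the proof; it is not an incidental obstacle. As a summary of the cited result your proposal is serviceable, but it would not stand on its own as a proof, and the paper itself does not attempt one.
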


\begin{theorem}[\cite{kcut}]
    Assuming the Small Set Expansion Hypothesis, it is NP-hard to approximate \kcut to within $(2-\varepsilon)$ factor of the optimum, for any constant $\varepsilon>0$.
\end{theorem}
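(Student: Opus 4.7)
The plan is to build a gap-preserving reduction from Small Set Expansion, whose hardness is exactly what SSEH asserts. Start from a (nearly) $d$-regular SSE instance $G$ on $n$ vertices with size parameter $\delta$ and expansion parameter $\eta$: the task is to distinguish whether some $S \subseteq V$ with $|S|=\delta n$ satisfies $|\delta(S)| \le \eta d |S|$, or whether every such set has $|\delta(S)| \ge (1-\eta) d |S|$. I would use this $G$ directly as the \kcut instance with $k = \lfloor 1/\delta \rfloor$, possibly after a regularization or blow-up gadget so that subinstance cuts are well behaved.

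The key identity is $\sum_{i=1}^k |\delta(V_i)| = 2|C|$ for any partition $\{V_1,\dots,V_k\}$ with cut $C$, so the \kcut cost is half the total component boundary. In the YES case, place $S$ in one part and split $V \setminus S$ into $k-1$ balanced pieces. The part $S$ contributes at most $\eta d |S|$, and with the right construction (for instance, forcing the complement to itself admit a cheap $(k-1)$-partition) the remaining pieces contribute only $o(dn)$, so the total cut has value $O(\eta d n)$. In the NO case, by pigeonhole all but one part has size $\le 2\delta n$, hence is a small set for the SSE instance, and each such part has boundary at least $(1-\eta) d |V_i|$. Summing, the total boundary is at least $(1-\eta) d (1-2\delta) n$, and the \kcut value is at least $\tfrac{1}{2}(1-\eta) d n (1-2\delta)$.

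Tuning the parameters so that $\eta \to 0$ and $\delta \to 0$ appropriately, the ratio between the NO-case lower bound and the YES-case upper bound tends to $2$, yielding $(2-\varepsilon)$-inapproximability for any constant $\varepsilon>0$. The main obstacle will be engineering the reduction so that $V \setminus S$ admits a cheap $(k-1)$-partition in the YES case while preserving SSE hardness; without this, the YES cost could itself be close to $dn$ and the gap would collapse. This typically requires a product or tensor construction, or an explicit amplification step, to decouple the cost incurred around $S$ from the cost across the rest of the graph, and then to verify that the resulting ``SSE-regularized'' instance is still a valid source of hardness.
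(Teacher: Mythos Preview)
The paper does not prove this theorem at all: it is quoted verbatim as a background result from \cite{kcut} in the preliminaries section, with no accompanying argument. So there is nothing in the paper to compare your proposal against.

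As for the proposal itself, you have correctly identified the heuristic picture---sum of boundaries equals twice the cut, small parts must expand in the NO case, a non-expanding small set gives one cheap piece in the YES case---but you have also correctly identified the actual difficulty, and then not resolved it. The step ``split $V\setminus S$ into $k-1$ balanced pieces \dots\ the remaining pieces contribute only $o(dn)$'' is exactly where the argument breaks. In a generic SSE instance there is no reason for $V\setminus S$ to admit a cheap $(k{-}1)$-partition; on a good expander the cost of \emph{any} balanced partition of the complement is $\Theta(dn)$, so the YES value is $\Theta(dn)$ and the gap collapses to $1+o(1)$. Saying that ``a product or tensor construction'' will fix this is not a proof: you would need to specify the construction and verify both (i) that the YES instance now has $k$ disjoint non-expanding small sets simultaneously, and (ii) that the NO guarantee survives the construction. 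Neither is automatic.

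The published proof in \cite{kcut} handles this by reducing not from a single SSE instance but from a strengthened variant (via known SSE self-reducibility/amplification results) in which the YES case already guarantees many disjoint small non-expanding sets, and then analyzing the NO case with a careful averaging argument over parts of the partition. Your sketch is missing precisely this ingredient, so as written it is a plan with an acknowledged hole rather than a proof.
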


From matroid theory, we use the notion of gammoids. A \emph{gammoid} $ M=(D,S,T) $ is a matroid defined by a digraph $ D=(V,E) $, a set of source nodes $ S\subseteq V $, and a set of target nodes $ T\subseteq V\setminus S $. A set $ X\subseteq T $ is independent in $ M $ if there exist $ |X| $ node-disjoint paths from elements of $ S $ into $ X $. 
Optimizing over a gammoid, as with any other matroid, can be done efficiently using the greedy algorithm.

Finally, Gomory-Hu (GH) tree \cite{gomory1961multi} is a standard tool in graph cut algorithms. The \emph{GH tree} of a graph $G=(V,E)$ with weight function $w:E\to\bR_+$ is a tree $T=(V,F)$ together with weight function $w_T:F\to\bR_+$ that  encodes the minimum-weight $s-t$ cuts for each pair $s,t$ of nodes in the following sense: the minimum $w_T$-weight of an edge on the $s-t$ path in $T$ is equal to the minimum $w$-weight of a cut in $G$ separating $s$ and $t$. Furthermore,  the two components of the tree obtained by removing the edge of minimum $w_T$-weight on the path give the two sides of a minimum $w$-weight $s-t$ cut in $G$.

\section{Lifted Cuts}
\label{sec:lifted}

The goal of this section is to show that the following restriction of the \uml relaxation to a one-dimensional lifting of the CKR relaxation admits an $ \alpha $-approximation to its integer optimum. We define the lifted cut problem \lifted, which takes as input a graph $G=(V,E)$ with edge-weights $w:E\to\bR_+$, \emph{fixed} terminals $ S=\{s_1,s_2,\ldots,s_q\}\subseteq V $, and a list of possible colors for each node $ \ell:V\to \mathcal{P}[q+1] $, the power set of $[q+1]$, satisfying the following two conditions:
\begin{enumerate}[(A)]
    \item\label{item:1} $ \ell(s_i)=\{i\} $ for $ i=1\ldots q $,
    \item\label{item:2} $ q+1\in \ell(v) $ for all $ v\in V \setminus S $.
\end{enumerate} 
The goal is then to assign a color to each node from its list such that the total weight of dichromatic edges is minimized. We call the following linear programming relaxation of the \lifted problem LIFT-LP:
\begin{alignat*}{2}
    & \text{minimize} & & \sum_{(u,v)\in E} w_{u,v}\|x^u - x^v\|_1 \\
    & \text{subject to}& \quad & \begin{aligned}[t]
        x^u & \in \Delta_{q+1},& u & \in V\\
        x^{u}_i & = 0& i & \notin \ell(u).
    \end{aligned}
\end{alignat*}
Condition \ref{item:1} ensures that the set $ S $ indeed defines terminals that vertices of the simplex are assigned to, as in \mwc, but Condition \ref{item:2} offers a relaxation, allowing a vertex of the simplex to not be assigned to any terminal. This condition gives an additional dimension to the simplex (see Figure \ref{fig:lift}), while still preserving the approximation guarantees given by the rounding algorithms for CKR.
\begin{figure}[ht]
       \centering
       \begin{tikzpicture}[scale=1.25]
           \tikzstyle{point}=[thick,inner sep=0.2pt,minimum width=4pt,minimum height=4pt]
           \tikzstyle{interior} = [thick,inner sep=0.2pt,minimum width=4pt,minimum height=4pt,circle]
           
           \node (a)[draw=red,point,label={[label distance=0cm]180:$s_1$}] at (-4,0) {};
           \node (b)[draw=green,point,label={[label distance=0cm]0:$s_2$}] at (-2,0) {};
           \node (d)[draw=blue,point,label={[label distance=0cm]0:$s_3$}] at (-3,0.7) {};
           \draw (a.center) -- (b.center);
           \draw (a.center) -- (d.center) -- (b.center);

           \node(1)[fill=green,interior] at (-2.2,.05) {};
           \node(2)[fill=blue,interior] at (-3.2,.53) {};
           \node(3)[fill=blue,interior] at (-2.8,.4) {};
           \node(1)[fill=red,interior] at (-3.5,.15) {};
           \node(1)[fill=black,interior] at (-2.65,.15) {};
           \node(1)[fill=black,interior] at (-3.4,.35) {};
           
           \draw[-{Latex[length=4mm,width=4mm]}] (-1.5,.7) -- (-0.5,.7);
           
           \node (a)[draw=red,point,label={[label distance=0cm]180:$s_1$}] at (0,0) {};
           \node (b)[draw=green,point,label={[label distance=0cm]0:$s_2$}] at (2,0) {};
           \node (c)[point,label={[label distance=0cm]0:}] at (1,2) {};
           \node (d)[draw=blue,point,label={[label distance=0cm]0:$s_3$}] at (1,0.7) {};
           \draw (a.center) -- (b.center) -- (c.center) -- cycle;
           \draw[dashed] (a.center) -- (d.center) -- (b.center);
           \draw[dashed] (d.center) -- (c.center);

           \node(1)[fill=green,interior] at (1.7,.6) {};
           \node(2)[fill=blue,interior] at (1,1.1) {};
           \node(3)[fill=blue,interior] at (1,1.5) {};
           \node(1)[fill=red,interior] at (.4,.8) {};
           \node(1)[fill=black,interior] at (1.65,.15) {};
           \node(1)[fill=black,interior] at (0.4,.35) {};
       \end{tikzpicture}
       \caption{An example of the original CKR relaxation in relation to our extended LIFT-LP on the case for {\tiTj}. The point colors represent the different candidate sets.}\label{fig:lift}
   \end{figure}
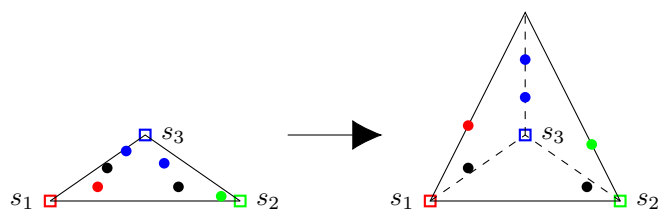

\begin{theorem}\label{thm:lift}
    The rounding scheme of \cite{SharmaVondrak}, when applied to LIFT-LP using Algorithm \ref{alg:kleinberg} in place of the exponential clocks algorithm, and with the modification that only the first $q$ coordinates are permuted in the threshold algorithms while coordinate $q+1$ is always left last, gives an $ \alpha $-approximation to \lifted.
\end{theorem}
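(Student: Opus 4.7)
The approach is to verify that the per-edge analysis underlying the Sharma--Vondrák $\alpha$-bound carries over unchanged to LIFT-LP. Following the standard CKR framework, I would first reduce to the case when every edge is axis-aligned: any edge $(u,v)$ with $x^u \neq x^v$ can be subdivided along a monotone $\ell_1$-path in the simplex so that each piece differs in at most two coordinates, preserving both the total $\ell_1$-cost and the distribution of threshold-style cuts. After this reduction, axis-aligned edges of LIFT-LP fall into two types: those aligned along two coordinates $i,j \in [q]$, and those aligned along one coordinate $i \in [q]$ and the lifted coordinate $q+1$.

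For edges of the first type, the behavior of every algorithm in the Sharma--Vondrák portfolio depends only on the thresholds $\rho_i,\rho_j$ and on the relative order of $i$ and $j$ in the permutation. Under the modified rounding, in which the first $q$ coordinates are permuted uniformly and $q+1$ is fixed last, every pair $i,j \in [q]$ still has each relative order with probability $1/2$. This matches the symmetric CKR rounding exactly, so the per-edge bound of \cite{SharmaVondrak} applies verbatim to these edges.

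For edges of the second type, the key observation is that the Sharma--Vondrák per-edge bound is \emph{symmetric} in the two coordinates of an axis-aligned edge: given a fixed configuration $(x^u_i,x^v_i,x^u_{q+1},x^v_{q+1})$, swapping the roles of $i$ and $q+1$ yields the same cut probability in each constituent sub-algorithm. In the usual symmetric rounding one averages the per-edge bound over the two possible relative orders, so by this symmetry each conditional bound equals the unconditional average. Our modification simply conditions on ``$i$ before $q+1$'', so the per-edge bound is preserved. For the Kleinberg--Tardos algorithm used in place of the exponential-clocks procedure, the analysis in \cite[Section~6]{buchbinder} is already symmetric across all labels, and enlarging the label set from $[q]$ to $[q+1]$ while allowing color $q+1$ in every non-terminal's list does not disturb the per-edge ratio.

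Combining the two cases gives, for every axis-aligned edge, an expected cut probability of at most $\alpha \cdot \|x^u - x^v\|_1$; summing with edge weights yields the theorem. The principal delicate point is the symmetry claim applied to the descending-threshold algorithm used by Sharma--Vondrák: once one endpoint of an $(i,q+1)$-axis-aligned edge has been claimed by coordinate $i$, the other endpoint is assigned to $q+1$ at the very end of the procedure, and one must verify that this contribution mirrors exactly what would occur in symmetric CKR on $q+1$ terminals with a fictitious terminal $s_{q+1}$ placed last in the permutation. The fact that $q+1$ is processed last in the modified rounding, together with condition~\ref{item:2} which ensures that every unassigned node can accept color $q+1$, makes this matching exact, so no further modifications to the Sharma--Vondrák analysis are needed.
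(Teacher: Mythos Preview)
Your symmetry argument for edges aligned along some $i \in [q]$ and the lifted coordinate $q+1$ has a genuine gap. You claim that because the per-edge cut probability in the symmetric $(q{+}1)$-terminal rounding is invariant under relabeling the two coordinates, the probability conditioned on ``$i$ before $q+1$'' must equal the unconditional average. This inference fails: label-swapping only tells you that the probability for configuration $(u_i,u_{q+1})$ conditioned on ``$i$ before $q+1$'' equals the probability for the \emph{swapped} configuration $(u_{q+1},u_i)$ conditioned on ``$q+1$ before $i$''; it says nothing about the two conditional probabilities for the \emph{same} configuration, and a direct computation on any small example (e.g.\ Single Threshold with $q=2$) shows they typically differ. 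Moreover, the modified rounding actually conditions on the stronger event ``$q+1$ is last among all $q{+}1$ coordinates'', not merely ``$i$ before $q+1$''.

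The paper's argument is different and is what actually works. Since coordinate $q+1$ carries no threshold, an $(i,q+1)$-aligned edge can only be separated via the threshold for coordinate $i$---the endpoints agree on every other thresholded coordinate. In the fully symmetric $(q{+}1)$-terminal rounding, permutations in which $q+1$ does receive a threshold can only create additional separating events. Hence the modified rounding cuts each such edge with probability \emph{at most} that of the symmetric $(q{+}1)$-terminal rounding, and the Sharma--Vondr\'ak per-edge bounds for the latter apply directly. This ``one fewer threshold can only help'' monotonicity (cf.\ \cite[Remark~2]{buchbinder}) is the actual reason the extension to LIFT-LP goes through, not a symmetry of conditional bounds.

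A smaller omission: you do not verify feasibility. One must check that the rounded labeling respects every list $\ell(v)$. The paper does this explicitly: $x^v_i=0$ for $i\notin\ell(v)$ prevents any threshold from assigning $v$ to color $i$, and condition~\ref{item:2} guarantees that the catch-all color $q+1$ is always admissible for non-terminals---which is precisely why $q+1$ must be the coordinate left without a threshold.
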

\begin{proof}

First, we have to argue that the threshold algorithms give feasible solutions to \lifted (for Algorithm \ref{alg:kleinberg}, this follows since \lifted is a metric labeling problem). In all algorithms, $s_i$ is assigned to the $i$th component, since $x^{s_i}$ is the $i$th vertex of the simplex. For other nodes $v\in V \setminus S$, $x^{v}_i= 0$ guarantees that $v$ is not assigned to the $i$th component if $i\notin \ell(v)$. Here, we use the fact that the $(q+1)$st component  is the only one for which there is no threshold. Although it is possible that $x^{v}_{q+1}=0$ and $v$ is still assigned to the $(q+1)$st component, this is not a problem, because $ q+1\in \ell(v) $ by definition.

To prove that we have an $\alpha$-approximation, we need to show that 
the relevant bounds that are used in the analysis of 
the four algorithms mentioned in Sharma-Vondr\' ak \cite{SharmaVondrak} 
    carry through to this modified LIFT-LP. We give a sketch here, but the details are written out more carefully in Appendix \ref{app:ckr}. 
    We consider the two types of algorithms (i.e. threshold and exponential clocks) separately.

    It was observed in~\cite{buchbinder} that the exponential clocks algorithm can be replaced by the $2$-approximation for the \uml problem of Kleinberg-Tardos~\cite{KleinbergTardos}. Since LIFT-LP corresponds to a \uml problem, the bound in \cite[Lemma 3]{buchbinder} remains valid in our case. Since this is the relevant bound for the exponential clocks algorithm used in the analysis, we can conclude that Algorithm \ref{alg:kleinberg} for LIFT-LP gives the same guarantees.

    The other algorithms we need to consider are the threshold algorithms. These assume that there is a node at every vertex of the simplex, which is not necessarily true for the LIFT-LP as no variable needs to be at $ e^{q+1} $. We can however use that there is only \emph{one} such vertex, and change the order of the terminals so that this vertex is cut last.
    We can then use the analysis in \cite{SharmaVondrak} and \cite{buchbinder} of the CKR relaxation for $k=q+1$, as sketched below.

    The threshold algorithms first choose a random permutation of the nodes to achieve some symmetry, which is necessary for \emph{only} the first $ k-1 $ terminals. The last terminal, which is just assigned the remaining nodes, \emph{does not have its own threshold}. In each of the Single Threshold, Descending Threshold and Independent Threshold algorithms of \cite{SharmaVondrak}, thry prove results for the first two indices, and then argue that these hold for any pair of indices by symmetry. This is only directly clear for pairs in the first $ k-1 $. However, when 
 we consider an $(i,k)$-axis-aligned edge for some $i\in [k-1]$, the probability of cutting this edge can only be smaller as there is one less threshold to cut it; see \cite[Remark 2]{buchbinder} for a discussion. This reasoning holds even when there is no terminal at the $k$th vertex of the simplex.  
    
    Thus, the rounding scheme of \cite{SharmaVondrak}, with the modifications of using Algorithm \ref{alg:kleinberg} rather than exponential clocks and only permuting the first $q$ coordinates, gives an $ \alpha $-approximation for \lifted. For completeness, we include the relevant algorithms and lemmas from \cite{SharmaVondrak}, with the appropriate modifications, in Appendix \ref{app:ckr}.
    \end{proof}

\section{{\ntiTj} Problem}
\label{sec:ti-Tj}

In this problem, we are looking for a \emph{single} representative from each candidate set that will be separated from \emph{every} candidate in other candidate sets. This includes the other representatives, making the problem very similar to \mwc once the representatives are chosen. A key difference is that the optimal partition may have $ q+1 $ components.

We first look at the case where $ q $ is constant. 
    
    \begin{theorem}
        There is an $ \alpha $-approximation algorithm for \ntiTj when $ q $ is fixed.
    \end{theorem}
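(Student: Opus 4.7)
The plan is to reduce \ntiTj to polynomially many instances of \lifted by brute-force enumeration over the choice of representatives, and then invoke Theorem~\ref{thm:lift} to obtain an $\alpha$-approximation for each instance.

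Since $q$ is a fixed constant, the total number of possible tuples $(t_1,\dots,t_q) \in T_1 \times \cdots \times T_q$ is at most $n^q$, which is polynomial in the input size. I would iterate over all such tuples, and for each candidate tuple construct an instance of \lifted on the same graph $G$ with the same edge weights as follows: the fixed terminals are $s_i := t_i$ for $i \in [q]$, and the label lists are
\begin{alignat*}{2}
\ell(t_i) &= \{i\} &\quad& \text{for each } i \in [q],\\
\ell(v) &= \{i, q+1\} && \text{for each } v \in T_i \setminus \{t_i\},\\
\ell(v) &= [q+1] && \text{for each } v \in V \setminus \bigcup_{j=1}^{q} T_j.
\end{alignat*}
Conditions~\ref{item:1} and~\ref{item:2} of \lifted are satisfied by construction. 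I then run the $\alpha$-approximation from Theorem~\ref{thm:lift} on this \lifted instance and record the resulting cut. Finally, I output the best cut found over all $n^q$ guesses.

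The key correctness claim is that for the ``correct'' guess $(t_1^*,\dots,t_q^*)$ coming from an optimal solution of \ntiTj, the integer optimum of the constructed \lifted instance equals the optimum of \ntiTj. The forward direction is immediate: an optimal \ntiTj cut partitions $V$ into components, one of which contains each $t_i^*$; coloring each such component with color $i$ and coloring every other component with color $q+1$ gives a feasible \lifted labeling of the same weight, because the constraint that $t_i^*$ be separated from all of $T_j$ for $j \ne i$ forbids any $v \in T_j \setminus \{t_j^*\}$ from lying in the component of $t_i^*$, matching $i \notin \ell(v)$. Conversely, any feasible labeling of the \lifted instance yields a cut (the dichromatic edges) in which the component of $t_i^*$ contains only nodes $v$ with $i \in \ell(v)$, hence no node of $T_j$ for $j \ne i$; this is feasible for \ntiTj with representatives $t_i^*$. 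Since the cost functions coincide, running the $\alpha$-approximation of Theorem~\ref{thm:lift} on the correct guess produces a cut of weight at most $\alpha \cdot \mathrm{OPT}(\text{\ntiTj})$, and taking the minimum over all guesses can only improve this.

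The main subtlety, rather than an obstacle, is to justify the reduction modeling: one must check that using the ``free'' color $q+1$ faithfully represents the possibility that a non-representative node in $T_i$ ends up in a component containing no $t_j$ at all. This is handled by condition~\ref{item:2} of \lifted, which was introduced precisely to allow such ``unassigned'' components. Everything else—the polynomial enumeration bound, the feasibility of the induced labeling, and the guarantee from Theorem~\ref{thm:lift}—is routine.
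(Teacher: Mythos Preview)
Your approach is the same as the paper's: enumerate representatives, reduce to \lifted, and invoke Theorem~\ref{thm:lift}. However, your label assignment is ill-defined when the candidate sets overlap, and this breaks the reverse direction of your correctness argument. If $v \in (T_i \cap T_j) \setminus \{t_i, t_j\}$ for some $i \neq j$, your second rule simultaneously dictates $\ell(v) = \{i, q+1\}$ and $\ell(v) = \{j, q+1\}$; resolving it either way is wrong. For instance, taking $\ell(v) = \{i, q+1\}$ allows a feasible \lifted labeling to color $v$ with color $i$, placing $v$ in the component of $t_i$ --- but $v \in T_j$, so $t_i$ is not separated from all of $T_j$, and the resulting cut is infeasible for \ntiTj. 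Your claim ``the component of $t_i^*$ contains only nodes $v$ with $i \in \ell(v)$, hence no node of $T_j$ for $j \ne i$'' fails precisely here.

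The paper handles this by adding a third rule: if $v$ lies in more than one candidate set (and is not a chosen representative), set $\ell(v) = \{q+1\}$. It also explicitly discards any guess in which some $t_i \in T_j$ for $j \neq i$ (where no feasible cut exists and your rule $\ell(t_i)=\{i\}$ would conflict with $\ell(t_i)=\{j,q+1\}$). With these two easy corrections, your proof is complete and identical to the paper's.
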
    
    \begin{proof}
        First, guess the representative $t_i$ for each $i\in [q]$. As there are only $\prod_{i=1}^q |T_i| \leq n^q$ possible choices, this is polynomial in $n$ for fixed $q$. If a representative $t_i$ is in $T_j$ for some $j \neq i$, then there is obviously no solution. Otherwise, for a fixed choice of representatives, \ntiTj is an instance of \lifted. To see this, observe that the problem is equivalent to the \uml problem obtained by fixing the labels $ \ell(v) $ for $ v\in V $ as follows:
        \begin{enumerate}
            \item If $ v=t_i $ for some $ i $, then set $ \ell(v)\coloneqq \{i\} $.
            \item Otherwise, if $ v\in T_i\setminus\{t_i\} $ for a unique $i$, then set $ \ell(v)\coloneqq\{i,q+1\} $.
            \item If $ v\in T_i \cap  T_j\setminus\{t_i,t_j\}$ for $i \neq j$, then set $ \ell(v)\coloneqq\{q+1\} $.
            \item Finally, if $ v\in V\setminus \bigcup_{i\in[q]} T_i $, then set $ \ell(v) \coloneqq [q+1] $.
        \end{enumerate} 
        This is an instance of \lifted : Condition \ref{item:1} is a clear consequence of the first rule, and since any node that is not a representative has $ q+1 $ as one of its labels, Condition \ref{item:2} follows as well.
        Therefore, Theorem~\ref{thm:lift} leads to an $ \alpha $-approximation.
\end{proof}

Following the idea of the classical $ 2 $-approximation for \mwc discussed in the introduction, there is a simple $ 2 $-approximation when $ q $ is arbitrary. 

\begin{theorem}
    There is a $ 2 $-approximation algorithm for {\ntiTj}.
\end{theorem}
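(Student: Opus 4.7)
The approach is to generalise the isolating-cut scheme of Dahlhaus et al.\ to the candidate-set setting. For each $i\in[q]$ I would compute a minimum-weight \emph{isolating cut} $C_i$, defined as a cut of smallest weight such that in $G-C_i$ at least one node $t_i \in T_i \setminus \bigcup_{j\neq i} T_j$ is disconnected from $\bigcup_{j\neq i} T_j$. This can be found in polynomial time as follows: contract $\bigcup_{j\neq i} T_j$ to a single source $s^*$ and, for each candidate $v\in T_i \setminus \bigcup_{j\neq i} T_j$, compute a minimum $s^*$--$v$ cut; take $C_i$ to be the cheapest such cut and $t_i$ the corresponding node. (If $T_i \subseteq \bigcup_{j\neq i} T_j$ for some $i$, no feasible solution exists.) The algorithm then returns $C \coloneqq \bigcup_{i=1}^q C_i$ together with representatives $t_1,\ldots,t_q$.

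Feasibility is immediate: since $C_i \subseteq C$, the component of $t_i$ in $G-C$ is contained in its component in $G-C_i$, which avoids $\bigcup_{j \neq i} T_j$ by construction. For the approximation guarantee, I would compare against an optimum $(C^*, t_1^*,\ldots,t_q^*)$ and let $V_i^*$ be the connected component of $t_i^*$ in $G-C^*$. By feasibility of the optimum, $V_i^*$ contains no node of $T_j$ for any $j\neq i$, so $\delta(V_i^*)$ is itself an isolating cut of the kind described above, and hence $w(C_i)\le w(\delta(V_i^*))$.

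Summing these inequalities, and observing that each edge of $C^*$ appears in at most two of the boundaries $\delta(V_1^*),\ldots,\delta(V_q^*)$ (those corresponding to the two components that contain its endpoints, at most two of which lie among $V_1^*,\ldots,V_q^*$), one obtains
\[
w(C)\ \le\ \sum_{i=1}^q w(C_i)\ \le\ \sum_{i=1}^q w(\delta(V_i^*))\ \le\ 2\,w(C^*),
\]
which yields the claimed $2$-approximation. The one place where care is needed compared with the classical multiway-cut argument is the definition of $C_i$ itself: since only \emph{some} representative of $T_i$ needs to be isolated from the other candidate sets, $C_i$ must be optimised over the choice of representative rather than computed by a single min-cut. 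The same subtlety explains why the standard trick of dropping the heaviest $C_i$ does not improve the ratio to $(2-2/q)$ here -- the optimum partition may have more than $q$ components, so discarding an isolating cut can leave a candidate set with no valid representative in the resulting graph, as a simple small example on $T_1=\{a\}$, $T_2=\{b,c\}$, $T_3=\{d\}$ illustrates.
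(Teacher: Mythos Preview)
Your proof is correct and follows essentially the same isolating-cut approach as the paper: compute, for each $i$, a minimum-weight cut separating some $t_i\in T_i$ from $\bigcup_{j\neq i}T_j$, take the union, and bound each $w(C_i)$ by $w(\delta(V_i^*))$ in the optimum. Your write-up is in fact more careful than the paper's on several points---you make feasibility explicit, handle the infeasible case $T_i\subseteq\bigcup_{j\neq i}T_j$, and correctly explain why the ``drop the heaviest cut'' refinement to $(2-2/q)$ does not go through here.
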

\begin{proof}
    For each candidate set $ T_i $, let $ t_i\in T_i$ be a node for which the minimum-weight cut separating $t_i$ from  $\cup_{j\neq i} T_j $ is as small as possible, and let $C$ be the union of these isolating cuts. To see that the solution is within a factor $ 2 $ of the optimum, consider an optimal solution to \ntiTj and let $V_1,V_2,\dots, V_q,V_{q+1}$ denote the components after its deletion, where $ V_{q+1} $ may be empty and the components are ordered by the index of the representative they contain. The boundary of each $ V_i $ is an isolating cut of some candidate in $ T_i $, which the algorithm minimized. Summing up the weights of the boundaries, we count each edge twice, and the theorem follows.
    \end{proof}

\section{{\ntitj} Problem}
\label{sec:titj}

In this problem, we are looking for a \emph{single} representative from each candidate set together with a minimum multiway cut separating them. Note that when $ T_1=T_2=\ldots=T_q=V $, \ntitj generalizes \kcut where we seek the minimum-weight cut that partitions the graph into $ k $ parts. It is known that \kcut is hard to approximate within a factor of $2-\varepsilon$ for any $\varepsilon>0$, assuming SSEH \cite{kcut}. 

\begin{theorem}
There is an $\alpha$-approximation for \ntitj when $q$ is fixed.    
\end{theorem}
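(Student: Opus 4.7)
My plan is to reduce the problem to polynomially many instances of \mwc, exactly analogous to the handling of \ntiTj for fixed $q$ but with a simpler second stage. Since $q$ is a constant, the number of candidate tuples $(t_1,\ldots,t_q) \in T_1 \times \cdots \times T_q$ is at most $\prod_{i=1}^{q} |T_i| \le n^q$, which is polynomial in $n$. I would enumerate all such tuples; any tuple for which some node would have to serve as representative of two distinct candidate sets (i.e.\ $t_i = t_j$ for some $i\ne j$) is discarded as infeasible.

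For each remaining tuple, the observation to exploit is that once the representatives are fixed, the requirement of \ntitj — that the $t_i$'s are pairwise in different components after removing the cut — is \emph{exactly} the feasibility condition of \mwc on the terminal set $\{t_1,\ldots,t_q\}$. So for each enumerated tuple I would invoke the $\alpha$-approximation algorithm of Sharma and Vondr\'ak on the resulting \mwc instance, and return the cut of minimum weight found across all tuples (together with the tuple for which it was found).

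The approximation analysis is one paragraph. Let $(t_1^\ast,\ldots,t_q^\ast)$ be the representatives selected by an optimal solution to \ntitj, and let $C^\ast$ be its cut, of total weight $w^\ast = \mathrm{OPT}$. The cut $C^\ast$ is feasible for \mwc with terminals $\{t_1^\ast,\ldots,t_q^\ast\}$, so the optimum of this \mwc instance is at most $w^\ast$, and the $\alpha$-approximation run on it returns a cut $C$ with $w(C) \le \alpha w^\ast$. Since $C$ separates the chosen representatives pairwise, the pair $\bigl((t_1^\ast,\ldots,t_q^\ast), C\bigr)$ is feasible for \ntitj, and the algorithm considered this very tuple during enumeration, so its output has weight at most $\alpha \cdot \mathrm{OPT}$.

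I do not expect any genuine obstacle: the only conceptual step is noting that a \emph{fixed} choice of representatives reduces \ntitj to \mwc, after which guessing via enumeration is standard for fixed $q$. The only minor sanity check is the handling of tuples where candidate sets intersect and force $t_i=t_j$, which I simply skip.
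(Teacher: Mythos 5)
Your proposal is correct and is essentially identical to the paper's own argument: enumerate all $O(n^q)$ choices of representatives, run the $\alpha$-approximation for \mwc on each induced instance, and return the best cut found. No further comment needed.
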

\begin{proof}
When $ q $ is fixed, one can iterate through all the $O(n^q)$ possible choices of representatives, approximate the corresponding \mwc instance, and choose the best one. 
\end{proof}

For general $q$, it is helpful to first look at the case where $G$ is a tree. We show that in this special case, the problem reduces to finding the minimum cost basis of a gammoid. We call a cut $C\subseteq E$ \emph{good} if $G-C$ has a valid set of representatives, that is, if we can choose $|C|+1$ representatives that form a partial transversal of the candidate sets, and each component of $G-C$ contains a single representative from this partial transversal. The algorithm is presented as Algorithm~\ref{alg:tree_greedy}.

\begin{algorithm}[t!]
    \caption{Greedy algorithm for \ntitj on trees.}\label{alg:tree_greedy}
    \begin{algorithmic}[1]
    \Statex \textbf{Input:} A tree $G=(V,E)$, weights $w:E\to\bR_+$, candidates $T_1,\dots,T_q\subseteq V$.
    \Statex \textbf{Output:} A minimum-weight good cut $C\subseteq E$. 
        \State Set $C \gets \emptyset$.
        \While{$|C|< q-1$}
        \State $e\gets \argmin\{ w(e) : e\notin C, C+e\text{ is \emph{good}} \}$
        \State $C = C+e$
        \EndWhile
    \end{algorithmic}
\end{algorithm}

\begin{theorem}
    Algorithm~\ref{alg:tree_greedy} computes an optimal solution to {\ntitj} on trees.
\end{theorem}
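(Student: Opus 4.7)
The plan is to exhibit a matroid structure on the good cuts (specifically, one derived from a gammoid) and then invoke the fact that the matroid greedy algorithm finds a minimum-weight basis. First I would note that in a tree, cutting $|C|$ edges produces exactly $|C|+1$ components, so any feasible solution to \ntitj has $|C| \geq q-1$ and any minimum-weight solution has $|C|=q-1$, since a superfluous cut edge can be discarded without losing feasibility. Thus it suffices to find a minimum-weight good cut of size $q-1$.

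Next I would construct an explicit gammoid $M$ whose independent sets (after contraction) are precisely the good cuts. Root $G$ at an arbitrary node $r$, attach a virtual super-root $r'$ through a virtual edge $e_0=(r',r)$, subdivide each edge $e \in E\cup\{e_0\}$ with a new vertex $v_e$, and orient every tree arc from child toward parent. Introduce source nodes $s_1,\dots,s_q$ with arcs $s_i \to v$ for each $v \in T_i$, and set the target set to $\{v_e : e \in E \cup \{e_0\}\}$. The key claim is that $C \subseteq E$ is good iff $C \cup \{e_0\}$ is independent in $M$, i.e., iff there exist $|C|+1$ vertex-disjoint paths from sources to $\{v_e : e \in C \cup \{e_0\}\}$.

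For the forward direction, given a good cut $C$ with representatives $r_K \in K \cap T_{\phi(K)}$ (distinct $\phi(K)$) for each component $K$ of $G-C$, I would associate $K$ with the unique cut edge $e(K)\in C\cup\{e_0\}$ lying just above $K$ in the rooted augmented tree; then the paths $s_{\phi(K)} \to r_K \rightsquigarrow v_{e(K)}$, each traversing only nodes of $K$ together with $v_{e(K)}$, are vertex-disjoint and certify independence. For the reverse direction, given $|C|+1$ vertex-disjoint source-to-target paths, each path ends at some $v_e$ coming from a representative $r\in T_i$. The child-to-parent orientation forces $r$ to be a descendant of the child endpoint of $e$, and vertex-disjointness forbids the path from passing through any other subdivision node $v_{e'}$ with $e'\in C$; together these imply $r$ lies in the child component of $e$. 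Since the sources used are distinct, the resulting assignment $e \mapsto (T_{i_e}, r_e)$ is an SDR of components to candidate sets.

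Finally, I would contract $e_0$ in $M$ to obtain a matroid on $E$ whose independent sets are exactly the good cuts and whose bases are the good cuts of size $q-1$. Algorithm~\ref{alg:tree_greedy} is literally the matroid greedy algorithm applied to this matroid with weight function $w$, so it returns a minimum-weight basis and hence an optimal solution. The main obstacle in the plan is the equivalence of vertex-disjoint paths with valid SDRs: one has to check carefully that the orientation together with vertex-disjointness prevents a path from "crossing" a cut through another subdivision node, and that distinct sources together with the child-component argument yield distinct reps in distinct components. This is the technical heart of the proof, and once established the greedy conclusion is immediate from matroid theory.
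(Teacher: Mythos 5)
Your proof is correct and uses essentially the same approach as the paper: both construct a gammoid on the rooted tree with source nodes $s_i$ pointing into the candidate sets $T_i$, prove that good cuts correspond exactly to independent sets, and then invoke the matroid greedy algorithm. The only difference is a cosmetic one of encoding — the paper takes the tree vertices themselves as gammoid targets, identifying each non-root vertex $v$ with the edge $e(v)$ above it and requiring $Z \cup \{r\}$ to be independent, whereas you subdivide edges, introduce a virtual super-root with edge $e_0$, take the subdivision vertices as targets, and contract $e_0$; the two constructions yield isomorphic matroids after the respective contractions.
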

\begin{proof}
    We prove the statement by showing that the problem is equivalent to optimizing over a gammoid. We construct a directed graph as follows. Let $r\in V$ be an arbitrary root node, and orient the edges of the tree towards $r$.
    For a non-root node $v$, we denote the unique arc leaving $v$ by $e(v)$ and define the cost of $v$ to be $w(e(v))$. Furthermore, for each set $T_i$, we add a node $s_i$ together with arcs from $s_i$ to the candidates in $T_i$. 
    
    Let $D$ denote the digraph thus obtained, $S\coloneqq\{s_1,\dots,s_q\}$, and $T\coloneqq V$, and consider the gammoid $M=(D,S,T)$. The key observation is the following.

    \begin{claim}\label{cl:2}
        For a set $Z\subseteq V\setminus\{r\}$, $C=\{e(v): v\in Z\}$ is a good cut if and only if $Z\cup\{r\}$ is independent in $M$.
    \end{claim}
    \begin{proof}
        For the forward direction, assume that $C=\{e(v): v\in Z\}$ forms a good cut. Let $Z=\{v_1,\dots,v_p\}$. Without loss of generality, we may assume that the candidate sets having a valid set of representatives in $G-C$ are $T_1,\dots,T_p,T_{p+1}$, where $v_i$ is in the component of the representative $t_i$ of $T_i$ and $r$ is in the same component as the representative $t_{p+1}$ of $T_{p+1}$. For $i\in[p]$, the edge $(s_i,t_i)$ and the path $t_i$-$v_i$ in the tree form an $s_i$-$v_i$ path; similarly, the edge $(s_{p+1},t_{p+1})$ and the path $t_{p+1}$-$r$ in the tree form an $s_{p+1}$-$r$ path. Furthermore, these paths are pairwise node-disjoint, since they use different connected components of $G-C$.
        
        For the other direction, assume that $Z\cup\{r\}$ is independent in $M$, and let $Z=\{v_1,\dots,v_p\}$. Without loss of generality, we may assume that there are pairwise node-disjoint paths from $s_i$ to $v_i$ for $i\in[p]$ together with a path from $s_{p+1}$ to $r$. Let $t_i\in T_i$ be the first node on the path starting from $s_i$ for $i\in[p+1]$. Then $\{t_1,\dots,t_{p+1}\}$ form a valid system of distinct representatives for the cut $C$ as  
        each of these nodes are in a separate component of $G-C$.
        \end{proof}
    
    By Claim \ref{cl:2}, a minimum-weight good cut can be determined using the greedy algorithm for matroids, which is exactly what Algorithm~\ref{alg:tree_greedy} is doing.
    \end{proof}

Algorithm~\ref{alg:tree_greedy} solves the special case when $ G $ is a tree. The classical $ (2-2/k) $ approximation for \mwc \cite{Dahlhausetal} uses 2-way cuts coming from the Gomory-Hu tree, and so does the $ (2-2/k) $ approximation for \kcut \cite{kcut2apx}. We follow a similar approach in Algorithm~\ref{alg:gh_greedy}. The algorithm can be interpreted as taking the minimum edges in the GH tree as long as they allow a valid system of representatives. The algorithm is presented as Algorithm~\ref{alg:gh_greedy}.

\begin{algorithm}[t!]
    \caption{Approximation algorithm for \ntitj on graphs.}\label{alg:gh_greedy}
    \begin{algorithmic}[1]
    \Statex \textbf{Input:} A graph $G=(V,E)$, weights $w:E\to\bR_+$, candidates $T_1,\dots,T_q\subseteq V$.
    \Statex \textbf{Output:} A feasible cut $C\subseteq E$. 
        \State Compute the Gomory-Hu tree $ H $ of $ G $.
        \State Run Algorithm \ref{alg:tree_greedy} on $ H $.
        \State Return the union of the cuts corresponding to edges found in Step 2.
    \end{algorithmic}
\end{algorithm}

\begin{theorem}\label{thm:ti-tj}
    Algorithm~\ref{alg:gh_greedy} computes a $ (2-2/q) $ approximation to {\ntitj} on arbitrary graphs.
\end{theorem}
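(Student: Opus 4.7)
The plan is to verify two things: (i) the output cut $C$ is feasible for \ntitj, and (ii) its weight is at most $(2-2/q)\cdot\mathrm{OPT}$.

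For feasibility, recall that each edge $e$ of the Gomory-Hu tree $H$ encodes a $G$-cut $\delta(S_e)$ of weight $w_H(e)$ separating the two components of $H-e$. If Step~2 of Algorithm~\ref{alg:gh_greedy} picks edges $e_1,\ldots,e_{q-1}$, splitting $V$ into $H$-components $U_1,\ldots,U_q$, then every individual $G$-cut $\delta(S_{e_j})$ respects this partition, so the components of $G-C$ refine $U_1,\ldots,U_q$. Since Algorithm~\ref{alg:tree_greedy} returns a good cut on $H$, the tree partition already admits a transversal with one representative per $U_i$, and the same transversal is a valid representative choice in $G-C$.

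For the approximation ratio, I would adapt the classical Saran--Vazirani $(2-2/k)$ analysis of \kcut. Fix an optimal solution $(C^*, V_1^*,\ldots,V_q^*)$ with $t_i^*\in V_i^*$ and $w_i^*=w(\delta(V_i^*))$, so that $\sum_i w_i^*=2\cdot\mathrm{OPT}$. Choose $i_0$ with $w_{i_0}^*=\max_i w_i^*$; then $\sum_{i\ne i_0}w_i^*\le(1-1/q)\cdot 2\cdot\mathrm{OPT}=(2-2/q)\cdot\mathrm{OPT}$. The objective is to exhibit an explicit good cut of exactly $q-1$ edges in $H$ whose total weight is at most $\sum_{i\ne i_0}w_i^*$. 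Once this is done, the matroid-greedy character of Algorithm~\ref{alg:tree_greedy}, established in the preceding theorem, guarantees it returns a cut of weight at most this quantity, and the $G$-cut returned by Algorithm~\ref{alg:gh_greedy} is bounded by the sum of the chosen $H$-edge weights.

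The construction, in Saran--Vazirani style, picks for each $i\ne i_0$ a distinct $H$-edge $e_i$ on some $H$-path between a vertex of $V_i^*$ and a vertex of $V_{i_0}^*$. By the Gomory-Hu property, $w_H(e_i)$ equals a minimum cut between those vertices in $G$, which is at most $w(\delta(V_i^*))=w_i^*$. Summing gives the desired weight bound, provided distinctness and a good-cut structure can be arranged; distinctness is handled by processing the components $V_j^*$ in order of increasing $w_j^*$ and using an exchange argument whenever the minimum-weight candidate edge has already been used.

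The main obstacle will be verifying that the chosen $q-1$ distinct $H$-edges actually form a \emph{good} cut, so that Algorithm~\ref{alg:tree_greedy} is being compared to a genuinely feasible competitor rather than merely to $q-1$ edges cheap in $H$. Concretely, one must show that the representatives $t_1^*,\ldots,t_q^*$ end up in $q$ distinct components of $H$ after the removal of $\{e_i\}_{i\ne i_0}$, a tree-separation property that can be enforced by choosing each $e_i$ on a branch of the $H$-tree rooted at $t_{i_0}^*$ leading to $V_i^*$. Once this feasibility step is settled, the weight bound follows directly from the Gomory-Hu property and the averaging inequality $w_{i_0}^*\ge\frac{1}{q}\sum_i w_i^*$.
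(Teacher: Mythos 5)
Your proposal takes a genuinely different route from the paper's. The paper does not build a competitor cut on $H$ directly from the $OPT$ partition: it iteratively removes from $H$ the minimum-weight edge that still separates some pair of optimal representatives lying in the same component. After $q-1$ removals, the $t_i^*$ are automatically in $q$ distinct components, so the selected edges form a good cut by construction; the weight bound then comes from re-indexing via an auxiliary spanning tree $F$ on $\{t_1^*,\ldots,t_q^*\}$, directed away from $t_q^*$, so that the edge $e^i$ entering $t_i^*$ is charged against $w(\delta(V_i^*))$. Your Saran--Vazirani-style approach, picking one $H$-edge per light component $V_i^*$, can be made to work, but the proposal as written leaves two genuine gaps.

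First, the phrase ``$w_H(e_i)$ equals a minimum cut between those vertices'' only holds if $e_i$ is the \emph{minimum-weight} edge on the $u$--$v$ path; for an arbitrary path edge $w_H(e_i)$ can be strictly larger. Second, the good-cut requirement is not merely a technicality to ``enforce'' by rooting at $t_{i_0}^*$: with the minimum-on-path choice, both distinctness and separation can fail (e.g., in a star GH tree whose pendant edge at $t_{i_0}^*$ is cheapest on every $P_i$), and the exchange argument you gesture at is not worked out and does not preserve the weight bound, because a second-cheapest edge on $P_i$ is no longer comparable to $w(\delta(V_i^*))$. The clean repair of your plan is to define $e_i$ as the \emph{first edge on the $H$-path from $t_i^*$ to $t_{i_0}^*$ that leaves $V_i^*$}. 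Then $e_i$ has one endpoint in $V_i^*$ and one outside, so $w_H(e_i)\leq w(\delta(V_i^*))$ by the GH property applied to the endpoints of $e_i$ itself; distinctness holds because an edge $e_i=e_j$ on two root-ward paths is traversed in the same direction, yet its near endpoint would have to lie in both $V_i^*$ and $V_j^*$; and separation holds because for $i\neq j\neq i_0$, if neither $e_i$ nor $e_j$ were on the $t_i^*$--$t_j^*$ tree path, the meeting point $m$ of $P_i$ and $P_j$ would lie in $V_i^*\cap V_j^*=\emptyset$. These three verifications are exactly what your proposal leaves open, and precisely what the paper's iterative construction and $F$-tree re-indexing are designed to sidestep.
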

\begin{proof}
    Let $ OPT $ be the optimal solution with representatives $ t_1^*,\ldots, t_q^* $, and components $ V_1^*,\ldots,V_q^* $, where $ V_q^* $ has the maximum weight boundary $ \delta(V_q^*) $. Let also $ H $ be the GH tree of $ G $.

    We transform $OPT $ into a solution $OPT_{GH}$ on $H$, losing at most a factor of $(2-2/q)$.
    We do this by repeatedly removing the minimum weight edge in $E(H)$ that separates a pair among the representatives $t_1^*,\ldots, t_q^*$ that are in the same component of $H$. More precisely, we start with $H_0=H$, and take the minimum-weight edge $ e_1\in E(H_0) $ separating some pair of representatives $t_{i}^*, t_{j}^*$ in $OPT$ that are in the same component of $ H_0 $. Define the edge $f_1=(t_{i}^*, t_{j}^*) $. Then we construct $ H_1=H_0-e_1 $, and repeat this process to get a sequence of edges $ e_1,e_2,\ldots, e_{q-1} $ and a tree of representative pairs $ F=(\{ t_1^*,\ldots, t_q^* \},\{f_1,\ldots, f_{q-1}\}) $.

    Direct the edges of $ F $ away from $ t_q^* $, and reorder the edges such that $ f^1 $ is the edge going into $ t_1^* $, $ f^2 $ into $ t_2^* $, and so on. Let $ e^i $ be the edge of the GH tree corresponding to $ f^i $, i.e., the minimum weight edge of the path between the two endpoints of $ f^i $, and let $ U(e^i) $ be the cut corresponding to $ e^i $ for each $ i $.
    Then the boundary of each component satisfies $ w(\delta(V_i^*)) \geq w(U(e^i)) $, as $ \delta(V_i^*) $ separates the two representatives in $ f^i $ as well, and $ U(e^i) $ is the minimum-weight cut between these. 
    
    Let the solution $ OPT_{GH} $ be $ \bigcup_{i\in[q-1]} U(e_i) $,  $ ALG $ the cut found by the algorithm, $ ALG_{GH} $ the corresponding edges in the GH tree $ H $, and $ w_H $ the weight function on $ H $. Then
    \begin{align*}
        w(ALG)&\leq w_{H}(ALG_{GH}) \leq w_H(OPT_{GH}) =  \sum_{i=1}^{q-1} w(U(e^i)) \leq \sum_{i=1}^{q-1} w(\delta(V^*_i)) \\
        &\leq (1-1/q) \sum_{i=1}^{q} w(\delta(V^*_i)) \leq (2-2/q)w(OPT).
    \end{align*}
\end{proof}

\section{{\nstj}, {\ntijtj}, {\ntijtji}, and \ntijTj Problems}
\label{sec:stj}

In this section, we combine the study of four problems, as the techniques are similar. 

\subsection{Hardness of approximation}

All four have similar proofs of hardness of approximation, which we state here but leave the proofs to appendix \ref{app:hard} for brevity. 

\begin{theorem}\label{thm:hard}
For general $q$, \nstj, \ntijtj and \ntijTj are at least as hard to approximate as \hitting.
\end{theorem}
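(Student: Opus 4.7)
The plan is to proceed in two stages. First, reduce \hitting to \nstj by a weight-preserving ``star graph'' construction; then reduce \nstj to each of \ntijTj and \ntijtj by augmenting the instance with a single auxiliary candidate set. Chaining these transfers the Dinur-Steurer $(1-\alpha)\ln N$ inapproximability of \hitting to all three problems.

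For the first reduction, given a \hitting instance with universe $U$ and sets $S_1,\dots,S_m$, I build a star: a central node $s$, a node $v_u$ for each $u \in U$, and a unit-weight edge $(s,v_u)$ for every $u$. Setting $T_j = \{v_u : u \in S_j\}$ for each $j \in [m]$ with $s$ as the fixed node, the only way to disconnect $s$ from $T_j$ is to cut some edge $(s,v_u)$ with $u \in S_j$, so feasible \nstj cuts $C$ biject with hitting sets via $C \leftrightarrow H=\{u:(s,v_u)\in C\}$ and weights agree exactly. Thus any $\alpha$-approximation for \nstj yields one for \hitting. For the reduction to \ntijTj, I augment this graph with an isolated node $s'$ and set $T_0' = \{s,s'\}$ and $T_i' = T_i$ for $i\in[m]$. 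The critical pair $(i,0)$ demands some $t_i^0 \in T_i'$ separated from both $s$ and $s'$; separation from $s'$ is automatic, so it encodes exactly the \nstj constraint for $T_i$. Pairs $(0,j)$ are satisfied trivially by choosing $t_0^j = s'$, while pairs $(i,j)$ with $i,j \geq 1$ are satisfied by any $v_u \in T_i$ with $u \in H \cap S_i$, since such a vertex is isolated once the hitting-set edges are cut. Conversely, any feasible \ntijTj cut must satisfy all pairs $(i,0)$, hence corresponds to a hitting set of at most the same weight.

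The reduction to \ntijtj uses the same augmented star, but the single-representative constraint requires a more delicate converse. For the forward direction, fix $t_0 = s'$ and, given a hitting set $H$, set $t_j = v_{u_j}$ for some $u_j \in H \cap S_j$; since every $t_j$ is then isolated in the cut graph, all pairs $(i,j)$ and $(j,0)$ are satisfied trivially. For the converse, I will argue that a feasible \ntijtj cut yields a set $H$ hitting every $S_i$ with at most one exception: if two distinct sets $S_i, S_{i'}$ were disjoint from $H$, then all of $T_i \cup T_{i'}$ would remain in the component of $s$, so no choice of $t_i^{i'} \in T_i$ could be separated from $t_{i'} \in T_{i'}$. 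I expect this ``off-by-one'' to be the main technical subtlety, but it introduces only an additive $+1$ when recovering a hitting set from the cut, which is absorbed harmlessly into the $o(\log q)$ inapproximability statement since the Dinur-Steurer hard instances can be chosen with $OPT_{\hitting} \to \infty$ and $q = m+1$ polynomial in $N$.
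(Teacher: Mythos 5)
Your first reduction (\hitting{} to \nstj{} via the unit-weight star) is exactly the paper's argument and is correct. The two onward reductions, however, contain a genuine gap: they omit the paper's key device of augmenting each candidate set with a ``safe'' node that is guaranteed to be cut off, and without it the reductions are not approximation-preserving and can even produce infeasible instances.

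For \ntijTj{}, your construction sets $T_i'=T_i$ for $i\geq 1$. If $S_1\subseteq S_2$ then $T_1'\subseteq T_2'$, and the pair $(1,2)$ asks for some $t_1^2\in T_1'$ separated from \emph{all} of $T_2'$, including $t_1^2$ itself, so the instance is infeasible regardless of the cut. The paper avoids this by attaching a fresh isolated node $s_i$ to every $T_i'$; then $t_i^j:=s_i$ is always a valid witness, and the correspondence with \nstj{} becomes exact. For \ntijtj{}, your forward direction breaks similarly. Take $S_1=\{1\}$, $S_2=\{2\}$, $S_3=\{1,2\}$: the \hitting{} optimum is $2$, so the star \nstj{} instance has optimum $2$, yet in your \ntijtj{} instance $t_1=v_1$ and $t_2=v_2$ are forced, and whichever of $v_1,v_2$ you choose for $t_3$, one of the pairs $(1,3)$ or $(2,3)$ requires a singleton $T_i'$ to supply a witness distinct from $t_3$; there is none, so the instance is infeasible. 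This is not an ``off-by-one'' that can be absorbed: the ratio between $OPT$ of your target instance and $OPT$ of the source is unbounded. The paper's reduction to \ntijtj{} instead sets $T_i':=T_i\cup\{s\}$ and $T_{q+1}':=\{s\}$, so $s$ can always serve as $t_i^j$ for $i,j\in[q]$ while $t_{q+1}=s$ enforces separation from $s$; you should adopt this (or the $s_i$-augmentation) to make both reductions sound.
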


We omit the \ntijtji problem from Theorem \ref{thm:hard} because it follows as a corollary to Theorem \ref{thm:equiv}, which states that it is equivalent to the known \textsc{Steiner Multicut} problem. The conditional $o(\log n)$ inapproximability was already proved for \textsc{Steiner Multicut} in \cite{reqhardness}, using similar instances as those in our proof of Theorem \ref{thm:hard}.
The \ntijtji problem asks to find a cut such that each pair of candidate sets have at least one element in separate components, where this choice can depend on the pair. 

\begin{theorem}\label{thm:equiv}
    The \ntijtji problem is equivalent to \textsc{Steiner Multicut}.
\end{theorem}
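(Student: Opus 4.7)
The plan is to exhibit two polynomial-time reductions between \textsc{Steiner Multicut} and \ntijtji, each keeping the graph and the edge weights intact and establishing a bijection between feasible cuts; this forces the two optimal values to coincide. Throughout, one assumes that each candidate set $T_i$ (respectively each Steiner set $X_i$) is nonempty of size at least two, since otherwise the corresponding instance is trivially infeasible.

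For the first direction, given an instance $(G, T_1, \dots, T_q)$ of \ntijtji, I would form the \textsc{Steiner Multicut} instance on the same graph $G$ with Steiner sets $X_{ij} := T_i \cup T_j$ for every pair $i < j$. That every \ntijtji-feasible cut $C$ is Steiner-feasible is immediate: if $t_i^j \in T_i$ and $t_j^i \in T_j$ lie in different components of $G - C$, then those two nodes witness that $X_{ij}$ is split. For the converse, let $u, v \in X_{ij}$ be two nodes in different components of $G - C$. If one of them lies in $T_i$ and the other in $T_j$ (a case that automatically covers any situation where one of $u, v$ sits in $T_i \cap T_j$), simply take them as the two representatives. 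Otherwise both lie entirely in $T_i \setminus T_j$ or entirely in $T_j \setminus T_i$; by symmetry assume the former, pick any $w \in T_j$, and observe that $w$ cannot share a component with both $u$ and $v$, so pairing $w$ with whichever of $u, v$ lies in a different component supplies the missing representative.

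For the reverse direction, given a \textsc{Steiner Multicut} instance $(G, X_1, \dots, X_q)$, I would build a \ntijtji instance on the same graph $G$ by setting $T_{2i-1} := T_{2i} := X_i$ for each $i \in [q]$. The within-copy pair $(2i-1, 2i)$ of \ntijtji demands two elements of $X_i$ in distinct components of $G - C$, which is precisely the $i$-th Steiner constraint. The cross-copy pair constraints, such as $(2i-1, 2j-1)$ for $i \neq j$, follow automatically from the Steiner constraints: if $X_i$ is split and $u, v \in X_i$ witness this, then for any $w \in X_j$ the node $w$ lies in a different component from at least one of $u, v$, providing a valid pair of \ntijtji-representatives.

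Since both reductions share the graph and the weight function, and the feasible cuts for the two instances coincide in each direction, the optimal values are equal, giving the claimed equivalence. The only nontrivial technical point is the subcase in the first reduction where the Steiner witnesses $u, v$ both lie on the $T_i$-side (or both on the $T_j$-side); the resolution is the elementary observation that any element of the opposite nonempty set cannot simultaneously share a component with both witnesses. Combined with the previous sections, this also yields Theorem~\ref{thm:hard} for \ntijtji as a corollary, as well as the $O(\log q \cdot \log n)$ approximation reported in Table~\ref{tab}.
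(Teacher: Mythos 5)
Your proof is correct and uses exactly the same two reductions as the paper (duplicating each Steiner set into a pair of candidate sets in one direction, taking all pairwise unions $T_i\cup T_j$ in the other); the only difference is that you collapse the paper's four-way case analysis into a single cleaner observation that any $w\in T_j$ cannot share a component with both of two separated witnesses, which is a mild streamlining rather than a different approach.
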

\begin{proof}
    To reduce from \textsc{Steiner Multicut}, we are given $q$ subsets $X_0,X_1,\ldots, X_{q-1}$ of nodes of a graph $G$, each of which needs to be cut into at least two components. 
    We construct a \ntijtji instance on the same graph with $2q$ candidate sets $T_0,T_1,\ldots,T_{2q-1}$, where $T_i = X_{\lfloor i/2 \rfloor}$ for $0\leq i \leq 2q-1$. Then, for each $j=0\ldots q-1$, the condition that $T_{2j}$ must be separated from $T_{2j+1}$ ensures that there are two nodes $t_{2j}^{2j+1}, t_{2j+1}^{2j} \in X_j$ that are in different components. 
    In other words, the solution is a minimal cut that, once removed, divides each set into at least two components. 
    If the conditions of \ntijtji hold for $T_{2j}$ and $T_{2j+1}$ for any $j$, then they hold automatically for any other pair of candidate sets too,  
    because once a set has elements in two components, at least one of them will be in a different component than some element of any given candidate  set.

    For the other direction, we are given $q$ subsets $T_1,\dots, T_q$ of nodes of a graph $G$ as a \ntijtji instance. We then make a \textsc{Steiner Multicut} instance with $\binom{q}{2}$ vertex sets indexed by pairs $i, j\in [q]^2$ with $i\neq j$.
    The set $X_{i,j}$ will then be $T_i\cup T_j$, which means any valid \textsc{Steiner Multicut} solution $C$ will split each of these sets into at least two components. We claim $C$ is a valid \ntijtji solution as well. Let $v_{i,j},u_{i,j}\in X_{i,j}$ be in different components of $G\setminus C$.
    Then one of the following cases must 
    hold:
    \begin{enumerate}
        \item $v_{i,j}\in T_i$ and $u_{i,j}\in T_j$. In this case, let $t_j^i \coloneqq u_{i,j}$ and $t_i^j\coloneqq v_{i,j}$.
        \item $u_{i,j}\in T_i$ and $v_{i,j}\in T_j$. In this case, let $t_j^i \coloneqq v_{i,j}$ and $t_i^j\coloneqq u_{i,j}$.
        \item $u_{i,j},v_{i,j}\in T_i$. Then either 
        \begin{enumerate}[(i)]
            \item all of $T_j$ is in the same component of $G\setminus C$ as $u_{i,j}$, in which case let $t_i^j\coloneqq v_{i,j}$, and set $t_j^i$ to an arbitrary element of $T_j$, or
            \item some vertex $w\in T_j$ is in a different component of $G\setminus C$ than $u_{i,j}$, in which case let $t_j^i\coloneqq w$, and $t_i^j\coloneqq u_{i,j}$.
        \end{enumerate}
        \item Similarly, if $u_{i,j},v_{i,j}\in T_j$, then either 
        \begin{enumerate}[(i)]
            \item all of $T_i$ is in the same component of $G\setminus C$ as $u_{i,j}$, in which case let $t_j^i\coloneqq v_{i,j}$, and set $t_i^j$ to an arbitrary element of $T_i$, or
            \item some vertex $w\in T_i$ is in a different component of $G\setminus C$ than $u_{i,j}$, in which case let $t_i^j\coloneqq w$, and $t_j^i\coloneqq u_{i,j}$.
        \end{enumerate}
    \end{enumerate}
    In all cases above, $t_j^i$ is in a different component than $t_i^j$ on $G\setminus C$, so $C$ is a valid \ntijtji solution. Any \ntijtji solution is clearly also a solution for this \textsc{Steiner Multicut} instance, so the optimal cut is the same for both.
\end{proof}

\subsection{Fixed $q$}

The techniques when $q$ is fixed differ, suggesting that the problems themselves are quite different, despite the apparent similarities. 

\paragraph*{Fixed Terminal}
The \nstj problem is slightly different from the others, as the goal here is to choose representatives that need to be separated only from a \emph{fixed} node $ s $. In this case, the problem can be solved efficiently.

\begin{proposition}\label{prop:poly}
For fixed $q$, \nstj can be solved in polynomial time.
\end{proposition}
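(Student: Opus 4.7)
The plan is to guess the representatives $(t_1,\ldots,t_q)\in T_1\times\cdots\times T_q$ and reduce each resulting subproblem to a standard minimum $s$--$T$ cut. Since $q$ is fixed, there are at most $\prod_{j=1}^q |T_j|\leq n^q$ such tuples, which is polynomial in $n$. If some $t_j$ equals $s$, the tuple is infeasible and is discarded. Otherwise, the minimum-weight cut consistent with a given guess is the minimum weight of an edge set $C\subseteq E$ such that $s$ lies in a different component from each $t_j$ in $G-C$; this is precisely a minimum $s$-to-$\{t_1,\ldots,t_q\}$ cut, which can be computed in polynomial time by contracting $\{t_1,\ldots,t_q\}$ into a single sink $t^*$ (or joining them to $t^*$ by edges of weight $\infty$) and running any standard max-flow/min-cut algorithm. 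The algorithm outputs the tuple attaining the smallest such value, together with the corresponding cut.

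For correctness, let $C^*$ be any optimal solution. By feasibility, for each $j$ there exists $t_j\in T_j$ lying in a component of $G-C^*$ different from that of $s$. With this choice of $(t_1,\ldots,t_q)$, the cut $C^*$ is a feasible $s$-to-$\{t_1,\ldots,t_q\}$ cut, so the minimum over all guesses of the corresponding min-cut value is at most $w(C^*)=\mathrm{OPT}$. Conversely, every cut produced by the algorithm is feasible for \nstj with the guessed $t_j$'s serving as witnesses, so the minimum value is at least $\mathrm{OPT}$. The total running time is $O(n^q\cdot T_{\mathrm{maxflow}}(n,m))$, which is polynomial whenever $q$ is constant.

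There is essentially no technical obstacle here: once one representative per $T_j$ is fixed, the problem collapses to a single-source/multi-sink min cut, and the number of tuples to enumerate is polynomial for fixed $q$. The interest of the statement lies in the contrast with the unbounded-$q$ regime, where this guess-and-solve strategy becomes exponential; indeed, Theorem~\ref{thm:hard} shows that \nstj is then $o(\log q)$-inapproximable via the reduction from \hitting, so the tractability observed here is genuinely a consequence of fixing $q$.
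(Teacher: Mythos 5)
Your proof follows the same approach as the paper's: enumerate all $O(n^q)$ choices of representatives, compute a minimum $s$-to-$\{t_1,\ldots,t_q\}$ cut for each, and return the best. You just flesh out the correctness argument and the max-flow reduction, which the paper leaves implicit.
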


\begin{proof}
    In this case, one can iterate through all possible choices of representatives, of which we have at most $n^q$, calculate a minimum two-way $ s-\{t_i: i\in[q]\}$ cut for each, and then take the best of all solutions.
\end{proof}

\paragraph*{Some to single/some} The {\ntijtj} and {\ntijtji} problems both become \mc instances with a constant number of terminals in this case, which gives the following theorem:

\begin{theorem}\label{thm:ntijtjifix}
For fixed $q$, there is an $\alpha$-approximation to {\ntijtj} and {\ntijtji}.
\end{theorem}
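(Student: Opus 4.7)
My plan is to reduce both problems, for fixed $q$, to polynomially many \mc instances each having only a constant number of demand pairs, and then to approximate each such instance within factor $\alpha$ by invoking the Sharma--Vondr\'ak \mwc algorithm on the constantly many contracted \mwc subproblems that arise from enumerating the corresponding Steiner-type partitions.

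First I would enumerate all possible choices of representatives. For {\ntijtji}, for each of the $\binom{q}{2}$ unordered pairs $\{i,j\}$, I would guess the pair $(t_i^j,t_j^i)\in T_i\times T_j$; this is at most $n^{q(q-1)}$ guesses, polynomial in $n$ for fixed $q$. For {\ntijtj}, I would first guess the single representative $t_j\in T_j$ for every $j\in[q]$ (at most $n^q$ options), and then for each ordered pair $(i,j)$ with $i\neq j$ guess $t_i^j\in T_i$ (another $n^{q(q-1)}$ options), discarding immediately infeasible guesses such as $t_i^j=t_j$. In either case the guess reduces the problem to a \mc instance on $G$ whose demand pairs are exactly the enumerated representative pairs; there are at most $q(q-1)$ such pairs, which is a constant.

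To approximate the resulting \mc instance, I would enumerate every partition $\Pi$ of the set $P$ of demand-pair endpoints (of size at most $2q(q-1)$) such that every demand pair is split across two blocks of $\Pi$. Since $|P|$ is bounded by a constant, the number of valid partitions is at most the Bell number $B_{|P|}$, also a constant. For each such $\Pi$, I would contract each block into a single supernode and run the Sharma--Vondr\'ak $\alpha$-approximation on the resulting \mwc instance, finally outputting the minimum-weight cut over all guesses and all partitions. Fix an optimal solution $C^*$: its choice of representatives appears in the outer enumeration, and the partition $\Pi^*$ of $P$ induced by the connected components of $G-C^*$ is among the valid partitions; on this guess and this $\Pi^*$, $C^*$ is feasible for the contracted \mwc instance, so Sharma--Vondr\'ak returns a cut of weight at most $\alpha\cdot w(C^*)$, giving the claimed ratio.

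The plan has no real hard step, only routine checks: one must verify that a feasible \mwc solution for the contracted graph lifts to a feasible solution of the original enumerated problem (immediate, since separating distinct supernodes separates the chosen representative pairs) and that for every original feasible solution there is a guess together with a partition $\Pi$ that make it feasible for the corresponding contracted \mwc instance (immediate from taking the representatives and connected components of the solution). The main conceptual point, and the only thing worth double-checking carefully, is that the total number of guesses and partitions stays polynomial, which holds because both the representative enumeration ($n^{O(q^2)}$) and the partition enumeration ($B_{O(q^2)}$) are polynomial for fixed $q$.
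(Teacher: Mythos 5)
Your proposal is correct and follows essentially the same approach as the paper: guess the representatives (polynomially many choices for fixed $q$), reduce to a multicut instance with a constant number of demand pairs, enumerate the (constantly many) partitions of the demand-pair endpoints, contract each partition's blocks into terminals, and run the Sharma--Vondr\'ak \mwc algorithm. The only cosmetic difference is that the paper observes one can bound the terminal set for \ntijtj by $3q$ rather than $q^2$, but since both bounds are constant for fixed $q$ this does not affect the argument.
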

\begin{proof}
    We will use the $\alpha$-approximation to \mwc on a polynomial number of instances with fixed terminals.
    We begin with the {\ntijtj} problem. In this problem, the goal is to choose a single representative $t_j$ for each $j\in[q]$ together with some candidate $t_i^j\in T_i$ for each pair $i\neq j$ that are then separated by the cut.
    
    When $ q $ is fixed, one can guess the representatives $ t_i^j $ and $ t_j $ to get a set of terminals $ S $ together with some separation demands on them. The number of such terminals can be bounded as $ |S|\leq q^2 $.
A slightly more careful analysis shows that the number of different $t_i^j$ nodes for a candidate set $T_i$ can be bounded by two. Thus, we only have to guess three representatives from each $ T_i $, implying $ |S|\leq 3q $. Either way, the number of guesses for $ S $ is polynomial in $n$. Each guess of $S$ defines a minimum multicut problem since we know which pairs of representatives have to be separated. We can compute an $\alpha$-approximation to each of these \mc problems by enumerating all possible partitions of $S$ (of which there are exponentially many in $q$) that satisfy the multicut demands, collapsing the partitions into fixed terminals, and calculating an $\alpha$-approximating multiway cut for each.

For the {\ntijtji} problem, again guess the representatives $t_i^j$ for each $i,j\in[q],i\neq j$ to get a set of terminals $S$ together with some separation demands on them. Since any candidate set with terminals in different components already has at least one element in a separate component for any other candidate set, the number of such terminals can be bounded by $|S|\leq 2q$. For each fixed $S$, we can find an $\alpha$-approximation the same way as above.
\end{proof}

Combining this approximation for \ntijtji with Theorem \ref{thm:equiv} gives the current best approximation for \textsc{Steiner Multicut} in the regime where the number of candidates depends on $n$, and the number of sets is constant.

\paragraph*{Some to all}
Finally, we consider the \ntijTj problem, which asks to find representatives $t_i^j\in T_i$ for each pair $ i,j\in[q] $ and a minimum-weight cut $C\subseteq E$ such that $ t_i^j $ is separated from \emph{all} of $ T_j $ in $G-C$. The case for constant $q$ uses the tool from Section \ref{sec:lifted}.

\begin{theorem}
    There is an $ \alpha $-approximation for {\ntijTj} when $ q $ is fixed.
\end{theorem}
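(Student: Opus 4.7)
The plan is to follow the \ntiTj template: enumerate all choices of representatives $t_i^j \in T_i$ over the ordered pairs $(i,j)\in [q]^2$ with $i\neq j$, which gives $O(n^{q(q-1)}) = \poly(n)$ guesses for fixed $q$, discard guesses with $t_i^j \in T_j$ (trivially infeasible), and reduce each remaining instance to \lifted so that Theorem~\ref{thm:lift} yields an $\alpha$-approximation. The new wrinkle compared to \ntiTj, which I expect to be the main obstacle, is that two distinct representatives $t_i^j$ and $t_k^l$ need to be separated only when $t_i^j \in T_l$ or $t_k^l \in T_j$; in particular, an optimal cut may group several representatives into a single component, so naively treating each distinct representative as its own \lifted terminal would force extra separations and destroy the $\alpha$ guarantee.

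To handle this, I would additionally enumerate all partitions $\mathcal{P} = \{B_1, \ldots, B_m\}$ of the set $R$ of distinct representative nodes; since $|R| \leq q(q-1)$ is a constant for fixed $q$, the number of partitions (bounded by the Bell number $B(q(q-1))$) is also constant. For each pair (representatives, partition), I would contract each block $B_k$ into a single node $r_k$ in $G$ and build a \lifted instance on the contracted graph with the $r_k$ as its $m$ fixed terminals. Writing $F_k \coloneqq \bigcup\{T_j : t_i^j \in B_k \text{ for some } i\}$, the label lists are $\ell(r_k) = \{k\}$ for each $k\in [m]$, and for every non-representative $v$,
\[ \ell(v) = \{m+1\} \cup \{k \in [m] : v \notin F_k\}. \]
The pair is discarded whenever $B_k \cap F_k \neq \emptyset$ for some $k$ (a representative in $B_k$ would have to be separated from itself). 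Conditions \ref{item:1} and \ref{item:2} of \lifted hold by construction, so Theorem~\ref{thm:lift} yields an $\alpha$-approximation on each \lifted instance, and the algorithm returns the best cut over all enumerations.

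For correctness, the key step will be to verify that the overall optimum is attained by one of the enumerated pairs. Given an optimal \ntijTj cut $C^*$ with its chosen representatives, let $\mathcal{P}^*$ be the partition of $R$ induced by the components of $G-C^*$. The feasibility check passes for $(\{t_i^j\}, \mathcal{P}^*)$ automatically: a witness $r \in B_k \cap F_k$ would be a representative in some $T_j$ sharing a component with a $t_i^j \in B_k$, contradicting the \ntijTj constraint on $t_i^j$. Since no edge of $C^*$ lies inside any $B_k$, the cut $C^*$ survives contraction and realizes a feasible \lifted labeling (color each component by its block index, or by $m+1$ if it has no representative) of the same cost, so the \lifted optimum on this instance is at most $w(C^*)$. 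Conversely, any feasible labeling of any enumerated \lifted instance yields a valid \ntijTj cut of the same cost, because the rule $k \notin \ell(v)$ for $v \in F_k$ prevents any node of $T_j$ from sharing $r_k$'s component whenever $t_i^j \in B_k$. Thus the best output has cost at most $\alpha \cdot w(C^*)$.
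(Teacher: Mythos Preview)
Your proposal is correct and follows essentially the same route as the paper: enumerate the representatives $t_i^j$ together with a partition of them into blocks, and for each guess build a \lifted instance whose terminals are the blocks and whose label lists forbid every $T_j$-node from the block containing $t_i^j$, then invoke Theorem~\ref{thm:lift}. The only cosmetic differences are that you contract each block into a single terminal (whereas the paper assigns the singleton label $\{k\}$ to every representative in block $V_k$) and that you phrase the discard condition as $B_k\cap F_k\neq\emptyset$, which is the natural strengthening of the paper's validity condition when the candidate sets may overlap.
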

\begin{proof}
    We guess all representatives $ t_i^j $; there are at most $ n^{q^2} $ possible choices, which is polynomial if $q$ is fixed. Note that we may assume that $t_i^j \neq t_j^{\ell}$ if $i \neq j$, otherwise there is obviously no solution. We also guess a valid partition $ V_1,\ldots, V_{q_1} $ of these representatives into $ q_1 $ components, where $ 2\leq q_1 \leq q^2 $ (validity means that $t_i^j$ and $t_j^{\ell}$ are in different classes of the partition if $i\neq j$). The number of such partitions is exponential in $ q $, but we can still enumerate them when $ q $ is fixed (note that this is not a partition of $V$, but a partition of the set of all chosen representatives, which is a vertex set of size at most $q^2$).  
For such a partition, the problem becomes an instance of $(q_1+1)$-dimensional \lifted with the following labels.
\begin{enumerate}[a)]
    \item If $ v\in V_k $ for some $ k \in [q_1] $, then set $ \ell(v)\coloneqq \{k\} $. \label{item:a}
    \item Otherwise, if $ v\in T_j $, then we must ensure that the label cannot be any partition containing some $ t_i^j $. In other words, set $ \ell(v)\coloneqq \{1,2,\ldots,q_1+1\}\setminus \{k: v \in T_j  \text{ and } t_i^j\in V_k \text{ for some }i,j\} $. \label{item:b}
    \item Finally, if $ v\in V\setminus \bigcup_{i\in[q]} T_i $, then set $ \ell(v) \coloneqq [q_1+1]$. \label{item:c}
\end{enumerate}
Conditions \ref{item:1} and \ref{item:2} of \lifted are not difficult to verify. The solution to this problem is a solution to {\ntijTj}. Indeed, consider the partition given by a solution to \lifted , which is an extension of the partition $ V_1,\ldots, V_{q_1} $ by condition \ref{item:a}, with an additional class for label $q_1+1$.
Condition \ref{item:b} then ensures, for a given $ t_i^j $, that the component of $ t_i^j $ cannot contain any element of $ T_j $. Thus, Theorem \ref{thm:lift} gives an $ \alpha $-approximation for {\ntijTj} as well.
\end{proof}

\bibliography{MFCS/multiway_MFCS_final}

\begin{thebibliography}{10}

\bibitem{buchbinder}
Niv Buchbinder, Joseph~(Seffi) Naor, and Roy Schwartz.
\newblock Simplex partitioning via exponential clocks and the multiway cut
  problem.
\newblock In {\em Proceedings of the Forty-Fifth Annual ACM Symposium on Theory
  of Computing}, STOC '13, page 535–544, New York, NY, USA, 2013. Association
  for Computing Machinery.
\newblock \href {https://doi.org/10.1145/2488608.2488675}
  {\path{doi:10.1145/2488608.2488675}}.

\bibitem{IntGap}
Kristóf Bérczi, Karthekeyan Chandrasekaran, Tamás Király, and Vivek Madan.
\newblock {Improving the integrality gap for multiway cut}.
\newblock {\em Mathematical Programming}, 183(1-2):171--193, 2020.
\newblock \href {https://doi.org/10.1007/s10107-020-01485-2}
  {\path{doi:10.1007/s10107-020-01485-2}}.

\bibitem{chawla2006hardness}
Shuchi Chawla, Robert Krauthgamer, Ravi Kumar, Yuval Rabani, and D~Sivakumar.
\newblock On the hardness of approximating multicut and sparsest-cut.
\newblock {\em computational complexity}, 15:94--114, 2006.

\bibitem{CKR}
Gruia Călinescu, Howard Karloff, and Yuval Rabani.
\newblock An improved approximation algorithm for multiway cut.
\newblock {\em Journal of Computer and System Sciences}, 60(3):564--574, 2000.
\newblock URL:
  \url{https://www.sciencedirect.com/science/article/pii/S0022000099916872},
  \href {https://doi.org/https://doi.org/10.1006/jcss.1999.1687}
  {\path{doi:https://doi.org/10.1006/jcss.1999.1687}}.

\bibitem{Dahlhausetal}
E.~Dahlhaus, D.~S. Johnson, C.~H. Papadimitriou, P.~D. Seymour, and
  M.~Yannakakis.
\newblock The complexity of multiterminal cuts.
\newblock {\em SIAM Journal on Computing}, 23(4):864--894, 1994.
\newblock \href
  {https://arxiv.org/abs/https://doi.org/10.1137/S0097539792225297}
  {\path{arXiv:https://doi.org/10.1137/S0097539792225297}}, \href
  {https://doi.org/10.1137/S0097539792225297}
  {\path{doi:10.1137/S0097539792225297}}.

\bibitem{Dinur}
Irit Dinur and David Steurer.
\newblock Analytical approach to parallel repetition.
\newblock In {\em Proceedings of the Forty-Sixth Annual ACM Symposium on Theory
  of Computing}, STOC '14, page 624–633, New York, NY, USA, 2014. Association
  for Computing Machinery.
\newblock \href {https://doi.org/10.1145/2591796.2591884}
  {\path{doi:10.1145/2591796.2591884}}.

\bibitem{garg1996approximate}
Naveen Garg, Vijay~V Vazirani, and Mihalis Yannakakis.
\newblock Approximate max-flow min-(multi) cut theorems and their applications.
\newblock {\em SIAM Journal on Computing}, 25(2):235--251, 1996.

\bibitem{gomory1961multi}
Ralph~E Gomory and Tien~Chung Hu.
\newblock Multi-terminal network flows.
\newblock {\em Journal of the Society for Industrial and Applied Mathematics},
  9(4):551--570, 1961.

\bibitem{RequirementCut}
Anupam Gupta, Viswanath Nagarajan, and R.~Ravi.
\newblock An improved approximation algorithm for requirement cut.
\newblock {\em Operations Research Letters}, 38(4):322--325, 2010.
\newblock URL:
  \url{https://www.sciencedirect.com/science/article/pii/S0167637710000258},
  \href {https://doi.org/https://doi.org/10.1016/j.orl.2010.02.009}
  {\path{doi:https://doi.org/10.1016/j.orl.2010.02.009}}.

\bibitem{Karger}
David~R. Karger, Philip Klein, Cliff Stein, Mikkel Thorup, and Neal~E. Young.
\newblock Rounding algorithms for a geometric embedding of minimum multiway
  cut.
\newblock {\em Mathematics of Operations Research}, 29(3):436--461, 2004.
\newblock \href {https://arxiv.org/abs/https://doi.org/10.1287/moor.1030.0086}
  {\path{arXiv:https://doi.org/10.1287/moor.1030.0086}}, \href
  {https://doi.org/10.1287/moor.1030.0086} {\path{doi:10.1287/moor.1030.0086}}.

\bibitem{Steiner}
Philip~N Klein, Serge~A Plotkin, Satish Rao, and Eva Tardos.
\newblock Approximation algorithms for steiner and directed multicuts.
\newblock {\em Journal of Algorithms}, 22(2):241--269, 1997.

\bibitem{KleinbergTardos}
J.~Kleinberg and E.~Tardos.
\newblock Approximation algorithms for classification problems with pairwise
  relationships: metric labeling and {M}arkov random fields.
\newblock In {\em 40th Annual Symposium on Foundations of Computer Science
  (Cat. No.99CB37039)}, pages 14--23, 1999.
\newblock \href {https://doi.org/10.1109/SFFCS.1999.814572}
  {\path{doi:10.1109/SFFCS.1999.814572}}.

\bibitem{kcut}
Pasin Manurangsi.
\newblock Inapproximability of maximum biclique problems, minimum k-cut and
  densest at-least-k-subgraph from the small set expansion hypothesis.
\newblock {\em Algorithms}, 11(1):10, 2018.

\bibitem{moshkovitz}
Dana Moshkovitz.
\newblock The projection games conjecture and the np-hardness of ln
  n-approximating set-cover.
\newblock In {\em APPROX-RANDOM}, pages 276--287. Springer, 2012.

\bibitem{reqhardness}
Viswanath Nagarajan and R~Ravi.
\newblock Approximation algorithms for requirement cut on graphs.
\newblock {\em Algorithmica}, 56:198--213, 2010.

\bibitem{kcut2apx}
Huzur Saran and Vijay~V. Vazirani.
\newblock Finding k cuts within twice the optimal.
\newblock {\em SIAM Journal on Computing}, 24(1):101--108, 1995.
\newblock \href
  {https://arxiv.org/abs/https://doi.org/10.1137/S0097539792251730}
  {\path{arXiv:https://doi.org/10.1137/S0097539792251730}}, \href
  {https://doi.org/10.1137/S0097539792251730}
  {\path{doi:10.1137/S0097539792251730}}.

\bibitem{Req2}
Roy Schwartz and Yotam Sharoni.
\newblock {Approximating Requirement Cut via a Configuration LP}.
\newblock In Jaros{\l}aw Byrka and Raghu Meka, editors, {\em Approximation,
  Randomization, and Combinatorial Optimization. Algorithms and Techniques
  (APPROX/RANDOM 2020)}, volume 176 of {\em Leibniz International Proceedings
  in Informatics (LIPIcs)}, pages 53:1--53:16, Dagstuhl, Germany, 2020. Schloss
  Dagstuhl -- Leibniz-Zentrum f{\"u}r Informatik.
\newblock URL:
  \url{https://drops.dagstuhl.de/entities/document/10.4230/LIPIcs.APPROX/RANDOM.2020.53},
  \href {https://doi.org/10.4230/LIPIcs.APPROX/RANDOM.2020.53}
  {\path{doi:10.4230/LIPIcs.APPROX/RANDOM.2020.53}}.

\bibitem{SharmaVondrak}
Ankit Sharma and Jan Vondr{\'a}k.
\newblock Multiway cut, pairwise realizable distributions, and descending
  thresholds.
\newblock In {\em Proceedings of the forty-sixth annual ACM symposium on Theory
  of computing}, pages 724--733, 2014.

\end{thebibliography}
\newpage

\appendix

\section{Hardness Proofs for \nstj, \ntijtj, and \ntijTj}\label{app:hard}

Here we prove Theorem \ref{thm:hard}. In the \hitting problem, the input is a family $ S_1,\ldots,S_m $ of sets, and the goal is to find a smallest set of elements that intersects each of them. 

\begin{proposition}\label{thm:stjhard}
For general $q$, \nstj is at least as hard to approximate as \hitting.
\end{proposition}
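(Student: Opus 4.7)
My plan is to give a straightforward approximation-preserving reduction from \hitting. Given a \hitting instance with ground set $U=\{u_1,\dots,u_N\}$ and family $S_1,\dots,S_m$, I would build a \nstj instance on a star: take $V=\{s\}\cup\{v_u:u\in U\}$, include a unit-weight edge $(s,v_u)$ for each $u\in U$, and set $T_j\coloneqq\{v_u:u\in S_j\}$ for $j\in[m]$, so $q=m$. This construction is clearly polynomial in the input size.

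To verify correctness, the key observation is that because every $v_u$ has degree one in this star, the edge $(s,v_u)$ is the only way to disconnect $v_u$ from $s$. Consequently a cut $C\subseteq E$ is feasible for \nstj exactly when for every $j\in[m]$ there is some $u\in S_j$ with $(s,v_u)\in C$, which is precisely the condition that $H(C)\coloneqq\{u\in U:(s,v_u)\in C\}$ hits every set $S_j$. Since all edges have unit weight, $w(C)=|H(C)|$; conversely, any hitting set $H$ yields a feasible cut $\{(s,v_u):u\in H\}$ of the same weight. Hence the two optima coincide, and any $\beta$-approximate \nstj cut $C$ yields a $\beta$-approximate hitting set $H(C)$.

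To finish, I would simply translate the Dinur--Steurer lower bound through this correspondence to conclude that \nstj does not admit an $o(\log q)$-approximation unless $\mathrm{P}=\mathrm{NP}$. The only subtle point, and the one guiding the gadget choice, is to rule out cheaper alternative ways of disconnecting $v_u$ from $s$ than cutting $(s,v_u)$ itself; restricting the graph to a star centered at $s$ eliminates any such alternative routes and makes the cost-to-hitting-set correspondence exact. There is no real obstacle beyond making this design choice, which is why the same star-gadget idea will drive the reductions for {\ntijtj} and {\ntijTj} as well (by augmenting the star with additional nodes that force the pair $(t_i^j,t_j)$ or $(t_i^j,T_j)$ constraints to behave like the fixed-source constraint of \nstj).
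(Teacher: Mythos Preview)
Your proposal is correct and is essentially identical to the paper's own proof: build a unit-weight star centered at $s$ with one leaf per ground-set element, set each candidate set $T_j$ equal to $S_j$, and observe that a feasible cut corresponds exactly to a hitting set of the same cost, so the reduction is approximation-preserving. Your write-up is in fact slightly more explicit than the paper's in justifying the bijection (via the degree-one observation), but the construction and argument are the same.
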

\begin{proof}
To reduce \hitting to \nstj, we create a graph $ G $ by adding a node $s$ together with edges of weight $1$ from $s$ to the elements of the ground set. Thus we get a star with center $ s $, and choose the candidate sets to be the sets $ S_1,\ldots,S_m $. Then, minimizing the number of edges needed to separate at least one node of each $S_i$ from $s$ is equivalent to finding a minimum hitting set; see Figure \ref{fig:hit} for an illustration. Note that the reduction is approximation factor preserving. Since there is no $ o(\log m) $ approximation for \hitting assuming $\mathrm{P} \neq \mathrm{NP}$ \cite{Dinur,moshkovitz}, the hardness of \nstj follows. 
\end{proof}

\begin{figure}[t!]
\begin{minipage}{.48\textwidth}
   \centering
   \vspace{.185in}
       \begin{tikzpicture}
           \node[external] (r)  at (0,0) {$ s $};
           \node[internal] (1) [below left=1cm and 2cm of r] {};
           \node[internal] (2) [below left=1cm and 1.2cm of r] {};
           \node[internal] (3) [below left=1cm and .4cm of r] {};
           \node[internal] (4) [below=.83cm of r] {};
           \node[internal] (5) [below right=1cm and .4cm of r] {};
           \node[internal] (6) [below right=1cm and 1.2cm of r] {};
           \node[internal] (7) [below right=1cm and 2cm of r] {};
           \draw[-] (r) -- (1);
           \draw[-] (r) -- (2);
           \draw[-] (r) -- (3);
           \draw[-] (r) -- (4);

           \draw[-] (r) -- (5);
           \draw[-] (r) -- (6);
           \draw[-] (r) -- (7);
           
           \begin{scope}[fill opacity=0.4]
               \filldraw[fill=yellow!70] ($(1)+(-0.5,0)$) 
               to[out=90,in=90,looseness=.5] ($(3) + (0.5,0)$)
               to[out=270,in=270,looseness=.5] ($(1) + (-0.5,0)$);
               \filldraw[fill=red!70] ($(3)+(-0.5,0)$) 
               to[out=90,in=90,looseness=.75] ($(4) + (0.5,0)$)
               to[out=270,in=270,looseness=.75] ($(3) + (-0.5,0)$);
               \filldraw[fill=green!70] ($(3)+(-0.5,0)$) 
               to[out=90,in=90,looseness=.5] ($(6) + (0.5,0)$)
               to[out=270,in=270,looseness=.5] ($(3) + (-0.5,0)$);
               \filldraw[fill=blue!70] ($(5)+(-0.5,0)$) 
               to[out=90,in=90,looseness=.5] ($(7) + (0.5,0)$)
               to[out=270,in=270,looseness=.5] ($(5) + (-0.5,0)$);
           \end{scope}
       \end{tikzpicture}
       \caption{A picture of the reduction for \nstj from \hitting. 
   }\label{fig:hit}
   \end{minipage}%
   \hfill
   \begin{minipage}{.48\textwidth}
   \centering
           \begin{tikzpicture}
               \node[external] (r)  at (0,0) {$ s $};
               \node[internal] (1) [below left=1cm and 2cm of r] {};
               \node[internal] (2) [below left=1cm and 1.2cm of r] {};
               \node[internal] (3) [below left=1cm and .4cm of r] {};
               \node[internal] (4) [below=.83cm of r] {};
               \node[internal] (5) [below right=1cm and .4cm of r] {};
               \node[internal] (6) [below right=1cm and 1.2cm of r] {};
               \node[internal] (7) [below right=1cm and 2cm of r] {};
               \draw[-] (r) -- (1);
               \draw[-] (r) -- (2);
               \draw[-] (r) -- (3);
               \draw[-] (r) -- (4);
               \draw[-] (r) -- (5);
               \draw[-] (r) -- (6);
               \draw[-] (r) -- (7);
               
               \begin{scope}[fill opacity=0.3]
                   \filldraw[fill=yellow!70] ($(1)+(-0.5,0)$) 
                   to[out=75,in=225,looseness=.5] ($ (r) + (-.5,.5) $)
                   to[out=45,in=45, looseness=1.707] ($ (r) + (.5,-.5) $)
                   to[out=225,in=45,looseness=.5] ($(3) + (0.5,0)$)
                   to[out=270,in=270,looseness=.5] ($(1) + (-0.5,0)$);
                   \filldraw[fill=red!70] ($(3)+(-0.5,0)$) 
                   to[out=90,in=225,looseness=.5] ($ (r) + (-.707,0) $)
                   to[out=90,in=90, looseness=1.707] ($ (r) + (.707,0) $)
                   to[out=270,in=90,looseness=.5] ($(4) + (0.5,0)$)
                   to[out=270,in=270,looseness=.75] ($(3) + (-0.5,0)$);
                   \filldraw[fill=green!70] ($(3)+(-0.5,0)$) 
                   to[out=90,in=225,looseness=.5] ($ (r) + (-.707,0) $)
                   to[out=90,in=90, looseness=1.707] ($ (r) + (.707,0) $)
                   to[out=315,in=90,looseness=.5] ($(6) + (0.5,0)$)
                   to[out=270,in=270,looseness=.5] ($(3) + (-0.5,0)$);
                   \filldraw[fill=blue!70] ($(5)+(-0.5,0)$) 
                   to[out=135,in=315,looseness=.5] ($ (r) + (-.5,-.5) $)
                   to[out=135,in=135, looseness=1.707] ($ (r) + (.5,.5) $)
                   to[out=315,in=135,looseness=.5] ($(7) + (0.5,0)$)
                   to[out=270,in=270,looseness=.5] ($(5) + (-0.5,0)$);
               \end{scope}
               
               \node[external, fill=red!30] (redraw)  at (0,0) {$ s $};
           \end{tikzpicture}
   \caption{A picture of the reduction from {\nstj} to {\ntijtj}.}\label{fig:red}
   \end{minipage}%
\end{figure}

\begin{proposition}\label{thm:tij-tj}
    For general $ q $, {\ntijtj} is at least as hard to approximate as {\nstj}.
\end{proposition}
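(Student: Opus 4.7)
The plan is to give an approximation-preserving reduction from \nstj to \ntijtj, as suggested by Figure~\ref{fig:red}. Given an \nstj instance $(G,s,T_1,\dots,T_q)$, the construction uses the same graph $G$ with $q+1$ candidate sets: $T'_j := T_j\cup\{s\}$ for each $j\in[q]$, plus an auxiliary singleton $T'_{q+1}:=\{s\}$. The $q$ enlarged sets correspond to the colored regions wrapping around $s$ in the figure, and the singleton is represented by the red-filled $s$.

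The first step is the easy direction: any \nstj solution lifts to a \ntijtj solution of the same cost. Given \nstj representatives $r_j\in T_j$ outside the $s$-component $V_s$ of $G-C$, I will set $t_j:=r_j$, $t_{q+1}:=s$, $t^j_i:=s$ for all $i,j\in[q]$ with $i\neq j$, $t^{q+1}_j:=r_j$, and $t^j_{q+1}:=s$. Checking that every required pair separates is routine, since $s\in V_s$ while each $r_j\notin V_s$.

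The second step — the heart of the reduction — is to show that any \ntijtj solution efficiently yields an \nstj solution of no greater cost. The key forcing comes from $T'_{q+1}=\{s\}$, which pins $t_{q+1}=s$ and $t^j_{q+1}=s$; the pair-$(q+1,j)$ constraint then requires $t_j$ to lie in a different component from $s$, and since $t_j\in T_j\cup\{s\}$ cannot equal $s$ (which is never separated from itself), one obtains $t_j\in T_j$ separated from $s$ in $G-C^*$ — exactly the \nstj witness for $T_j$.

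The main obstacle is to see why \emph{both} modifications to the candidate sets are necessary. Enlarging $T'_j=T_j\cup\{s\}$ is essential for the easy direction: without it, the pair-$(i,j)$ constraints for $i,j\in[q]$ can demand cuts beyond those of the \nstj optimum (e.g.\ on the triangle $s,v_1,v_2$ with $T_1=T_2=\{v_1,v_2\}$, the \nstj-optimal cut leaves $v_1,v_2$ in a single component, so no element of $T_1$ is separated from any element of $T_2$), whereas the fallback $t^j_i:=s\in V_s$ trivially handles all such pairs. Dually, adding the singleton $T'_{q+1}=\{s\}$ is what blocks the cheat $t_j=s$, which would otherwise let \ntijtj skip separating some $T_j$ from $s$ and make its optimum strictly smaller than that of \nstj. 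Once both features are in place the two optima coincide, yielding an approximation-preserving reduction and transferring the $o(\log q)$-inapproximability of \nstj to \ntijtj.
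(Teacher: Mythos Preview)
Your reduction is exactly the one in the paper: the same construction $T'_j=T_j\cup\{s\}$ for $j\in[q]$ and $T'_{q+1}=\{s\}$, the same choice of representatives in the forward direction, and the same forcing argument via $t^j_{q+1}=s$ in the backward direction. Your write-up is in fact more detailed than the paper's (which leaves the step ``$t_j$ separated from $s$ implies $t_j\in T_j$'' implicit), and your explanation of why both modifications are needed is a nice addition not present in the original.
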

\begin{proof}
    Given an instance $s,T_1,\ldots,\allowbreak T_q $ of {\nstj}, we create an instance of {\ntijtj} with $ T_i'\coloneqq T_i\cup \{s\} $ for $ i\in[q] $, and $ T_{q+1}'\coloneqq \{s\} $; see Figure \ref{fig:red} for an example.

    Given a solution to the {\nstj} instance, we can obtain a solution of the same weight to the {\ntijtj} instance by keeping the representatives $t_i$ for $i\in [q]$, setting $t_{q+1}=t^i_{q+1} \coloneqq s$ for $i\in [q]$, $t^{q+1}_i\coloneqq t_i$ for $i\in [q]$, and $t_i^j\coloneqq s$ for $ i,j\in [q],i\neq j $.

    For the other direction, we observe that 
    each $t_j$ ($j\in [q]$) must be separated from $s$ in a solution of the {\ntijtj} instance. Thus, we obtain a solution with the same weight for the {\nstj} if we keep the same representatives $t_j$ ($j\in [q]$).
    \end{proof}

\begin{proposition}\label{thm:tij-Tj}
    For general $ q $, {\ntijTj} is at least as hard to approximate as {\nstj}.
\end{proposition}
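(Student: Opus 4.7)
The plan is to give an approximation-preserving reduction from \nstj to \ntijTj, along the same lines as Proposition~\ref{thm:tij-tj} but with an extra gadget to deal with the stronger ``separated from all of $T_j$'' requirement. Given a \nstj instance $(G,s,T_1,\dots,T_q)$, I would build an \ntijTj instance with $q+1$ candidate sets. For each ordered pair $(i,j)$ with $i,j\in[q]$ and $i\neq j$, I introduce two fresh vertices $w_{ij},z_{ij}$ joined by a single edge of weight $M$ exceeding the total weight of $E(G)$, and otherwise isolated, so that $\{w_{ij},z_{ij}\}$ is its own connected component of the enlarged graph $G'$. I then set
\[
T_j' \coloneqq T_j \cup \{w_{jk} : k \in [q]\setminus\{j\}\} \quad \text{for } j\in[q], \qquad T_{q+1}' \coloneqq \{s\} \cup \{z_{ik} : i,k\in[q],\ i\neq k\}.
\]
The choice of $M$ ensures that no optimal (or near-optimal) cut slices an auxiliary edge, so we may assume output cuts lie in $E(G)$.

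Next I would verify the two directions of the reduction. For the forward direction, an \nstj cut $C$ with witnesses $t_1,\dots,t_q$ is feasible for the \ntijTj instance at the same cost: choose $t_i^j \coloneqq w_{ij}$ for pairs $(i,j)$ with $i,j\in[q]$, $t_{q+1}^j \coloneqq z_{ij}$ (for any $i\neq j$) for pairs $(q+1,j)$, and $t_i^{q+1}\coloneqq t_i$ for pairs $(i,q+1)$. In the first two cases the witness sits in an auxiliary component disjoint from the opposing candidate set; in the third case $t_i$ is separated from $s$ by hypothesis and is trivially separated from every $z_{ab}\in T_{q+1}'$ since those sit in distinct auxiliary components. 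For the backward direction, any \ntijTj cut $C\subseteq E(G)$ must satisfy the $q$ pairs of type $(i,q+1)$. Since every $w_{ik}\in T_i'$ is in the same component as $z_{ik}\in T_{q+1}'$ (via the uncut $M$-edge), the witness $t_i^{q+1}$ is forced into $T_i$ itself, and it must be separated from $s\in T_{q+1}'$. This is exactly the \nstj constraint for index $i$, so $C$ is a feasible \nstj solution of the same weight; hence an $f$-approximation for \ntijTj yields an $f$-approximation for \nstj.

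The main obstacle is engineering the candidate sets so that exactly the $q$ pairs $(i,q+1)$ encode the \nstj demands while the remaining $q^2$ pairs are automatically met. A direct imitation of Proposition~\ref{thm:tij-tj} fails: if $s$ is placed in every $T_j'$, then the constraint for pair $(q+1,j)$ requires $s$ to be separated from a set containing $s$, which is infeasible. The paired-vertex gadgets fix this by supplying free witnesses; the key design choice is to put $w_{ij}$ in $T_i'$ and its partner $z_{ij}$ in $T_{q+1}'$, so that $w_{ij}$ is available as a trivial witness for pairs $(i,j)$ with $i,j\in[q]$ distinct, but is \emph{blocked} from trivializing pair $(i,q+1)$ because its component contains $z_{ij}\in T_{q+1}'$. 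Verifying that $M$ is large enough to suppress cuts on auxiliary edges is routine, as is checking that the construction is polynomial in the input size.
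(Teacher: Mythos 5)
Your reduction is correct and approximation-preserving, but it is heavier than the one in the paper, which uses a noticeably slimmer gadget. The paper adds only $q$ \emph{isolated} vertices $s_1,\dots,s_q$, sets $T_i' \coloneqq T_i\cup\{s_i\}$, and $T_{q+1}'\coloneqq\{s,s_1,\dots,s_q\}$. The blocking effect you engineer with the $(w_{ij},z_{ij})$ pair is obtained for free there: $s_i$ is a trivially separated witness for every pair $(i,j)$ with $j\in[q]$, but it cannot serve as $t_i^{q+1}$ because $s_i\in T_{q+1}'$ and no vertex can be separated from itself. This simultaneously supplies free witnesses for pairs $(q+1,j)$ (use any $s_\ell$ with $\ell\neq j$), requires no auxiliary edges, and hence needs no choice of a large weight $M$. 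Your construction uses $\Theta(q^2)$ new vertices and an $M$ that must be tuned; both approaches are polynomial-size and valid, but the paper's is cleaner and sidesteps the edge-weight bookkeeping entirely.

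One small technical point in your version: taking $M$ merely larger than $w(E(G))$ is not quite enough. What you actually need is that the $f$-approximate solution returned on the \ntijTj instance avoids the auxiliary edges, i.e., $M > f\cdot\mathrm{OPT}(\ntijTj) \le f\cdot w(E(G))$, where $f$ is the approximation factor of the algorithm you are running. Since you only care about $f=o(\log q)$, setting, say, $M=(q+1)\cdot w(E(G))$ fixes this and keeps the instance polynomial; but as written, ``exceeding the total weight of $E(G)$'' leaves a gap. The paper's isolated-vertex gadget makes this concern disappear altogether.
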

\begin{proof}
    Given an instance of {\nstj} with sets $ T_1,\ldots,T_q $ on a graph $ G=(V,E) $ where $q \geq 2$, we construct a {\ntijTj} instance as follows. We add additional nodes $ V_0=\{s_1,s_2,\ldots, s_q\} $, $ G'=(V\cup V_0, E) $, $ T_i'=T_i\cup \{s_i\} $ for $ i=1\ldots q $, and $ T_{q+1}' = \{s,s_1,s_2,\ldots, s_q\} $; see Figure \ref{fig:tij-Tj} for an example.
    
   \begin{figure}[ht]
       \centering
       \resizebox{!}{.2\textwidth}{%
           \begin{tikzpicture}
               \node[external] (r)  at (0,0) {$ s $};
               \node[internal] (1) [below left=1cm and 2cm of r] {};
               \node[internal] (2) [below left=1cm and 1.2cm of r] {};
               \node[internal] (3) [below left=1cm and .4cm of r] {};
               \node[internal] (4) [below=.83cm of r] {};
               \node[internal] (5) [below right=1cm and .4cm of r] {};
               \node[internal] (6) [below right=1cm and 1.2cm of r] {};
               
               \node[external, fill opacity=0.4, fill=yellow!70] (s1) [below=0.5cm of 2] {$ \scriptsize s_1 $};
               \node[external, fill opacity=0.4, fill=red!70] (s2) [right=0.4cm of s1] {$ \scriptsize s_2 $};
               \node[external, fill opacity=0.4, fill=green!70] (s2) [right=0.4cm of s1] {$ \scriptsize s_2 $};
               \node[external, fill opacity=0.4, fill=green!70] (s3) [right=0.05cm of s2] {$ \scriptsize s_3 $};
               \node[external, fill opacity=0.4, fill=blue!70] (s4) [below=0.5cm of 6] {$ \scriptsize s_4 $};
               \node[internal] (7) [below right=1cm and 2cm of r] {};
               \draw[-] (r) -- (1);
               \draw[-] (r) -- (2);
               \draw[-] (r) -- (3);
               \draw[-] (r) -- (4);
               \draw[-] (r) -- (5);
               \draw[-] (r) -- (6);
               \draw[-] (r) -- (7);
               
               \begin{scope}[fill opacity=0.4]
                   \filldraw[fill=yellow!70] ($(1)+(-0.5,0)$) 
                   to[out=90,in=90,looseness=.5] ($(3) + (0.5,0)$)
                   to[out=270,in=270,looseness=.5] ($(1) + (-0.5,0)$);
                   \filldraw[fill=red!70] ($(3)+(-0.5,0)$) 
                   to[out=90,in=90,looseness=.75] ($(4) + (0.5,0)$)
                   to[out=270,in=270,looseness=.75] ($(3) + (-0.5,0)$);
                   \filldraw[fill=green!70] ($(3)+(-0.5,0)$) 
                   to[out=90,in=90,looseness=.5] ($(6) + (0.5,0)$)
                   to[out=270,in=270,looseness=.5] ($(3) + (-0.5,0)$);
                   \filldraw[fill=blue!70] ($(5)+(-0.5,0)$) 
                   to[out=90,in=90,looseness=.5] ($(7) + (0.5,0)$)
                   to[out=270,in=270,looseness=.5] ($(5) + (-0.5,0)$);
                   
                   \filldraw[fill=red!70] ($ (r)+(0,-.5) $) to[out=180,in=180,looseness=2]
                   ($ (r)+(0,.5) $)
                   to[out=0,in=90,looseness=.5]
                   ($ (7)+(1,0) $)
                   to[out=270,in=0,looseness=.5]
                   ($ (s4)+(0,-.5) $)
                   to[out=180,in=270,looseness=0]
                   ($ (s1)+(0,-.5) $) to[out=180,in=180,looseness=2]
                   ($ (s1)+(0,.5) $)
                   to[out=0,in=180,looseness=0]
                   ($ (s4)+(0,.5) $)
                   to[out=0,in=270,looseness=.5]
                   ($ (7)+(.6,0) $)
                   to[out=90,in=0,looseness=.5]
                   ($ (r)+(0,-.5) $);
               \end{scope}

           \end{tikzpicture}
       }
       \caption{A picture of the reduction from {\nstj}. The candidate set $ T_1 $ is in yellow, $ T_2 $ in brown, $ T_3 $ in green, $ T_4 $ in blue, and $ T_5 $ in red. }\label{fig:tij-Tj}
   \end{figure}
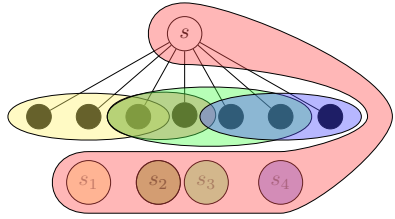

    Given a {\nstj} solution with representatives $ t_1^*,\ldots, t_{q}^* $, we get a solution to this instance as follows:
$t_{q+1}^j=s_{(j+1)\mod q}$ for $j \in [q+1]$; if $i\in [q]$, then $ t_i^{q+1}=t_i^*$, and $ t_i^j=s_i$ for $j\in [q]$. 
Then the same cut will separate each $ t_i^j $ from all of $ T'_j $, and have the same weight.
    
    Given an optimal solution to the {\ntijTj} instance, we can assume without loss of generality that $ t_i^j= s_{(j+1)\mod q} $ when $ i={q+1}, $ and  $ t_i^j=s_i $ when $ i\neq q+1, j\neq q+1 $, as these are separated from the corresponding $ T'_j $ in $ G' $. Then we can get a {\nstj} solution by setting $ t_j=t_j^{q+1} $ for all $j\in [q]$ and removing the same edges. This reduction preserves approximation, as the solutions have the same weight.
    \end{proof}

\section{Details of the Threshold Algorithms}\label{app:ckr}

For completeness, we include a detailed description of the $\alpha$-approximation algorithm for lifted cut. This is just a collection of the results of Sharma and Vondr\'{a}k \cite{SharmaVondrak}, but understanding this is necessary for the proof of Theorem \ref{thm:lift}. The content of this section can be found in more detail in \cite{SharmaVondrak}, the only modifications we make are to perform the rounding in $k+1$ dimensions and make more clear the role of the ${(k+1)}$st vertex.

First we describe the three threshold rounding schemes: Single Threshold, Descending Thresholds, and Independent Thresholds. These are described in Algorithms \ref{alg:single}, \ref{alg:desc}, and \ref{alg:indept}, respectively. 
Each scheme is given a solution to LIFT-LP, and rounds it to an integer solution by assigning vertices to terminals.
The Single Threshold Scheme takes as input some distribution with probability density $\phi$, Descending Thresholds some distribution with density $\psi$, and Independent Thresholds with density $\xi$. Finally, these schemes are combined with appropriate parameters along with Algorithm \ref{alg:kleinberg} according to Algorithm \ref{alg:combo}, which takes additionally parameters $b,p_1,p_2,p_3,p_4\in [0,1]$, along with some probability density $\phi$.

\begin{algorithm}[ht!]
    \caption{The Single Threshold Rounding Scheme}\label{alg:single}
    \begin{algorithmic}[1]
        \State Choose threshold $\theta \in [0,1) $ with probability density $\phi(\theta)$.
        \State Choose a random permutation $\sigma$ of $[k]$.
        \ForAll{$i\in [k]$}
        \State For any unassigned $u\in V$ with $x^u_{\sigma(i)}\geq \theta$, assign $u$ to terminal $\sigma(i)$.
        \EndFor
        \State Assign all remaining unassigned vertices to terminal $k+1$
    \end{algorithmic}
\end{algorithm}

\begin{algorithm}[ht!]
    \caption{Descending Thresholds Rounding Scheme}\label{alg:desc}
    \begin{algorithmic}[1]
        \State For each $i\in [k]$, choose threshold $\theta_i \in [0,1) $ with probability density $\psi(\theta)$.
        \State Choose a random permutation $\sigma$ of $[k]$ such that $\theta_{\sigma(1)} \geq \theta_{\sigma(2)} \geq \ldots \geq \theta_{\sigma(k)} $.
        \ForAll{$i\in [k]$}
        \State For any unassigned $u\in V$ with $x^u_{\sigma(i)}\geq \theta_{\sigma_i}$, assign $u$ to terminal $\sigma(i)$.
        \EndFor
        \State Assign all remaining unassigned vertices to terminal $k+1$
    \end{algorithmic}
\end{algorithm}

\begin{algorithm}[ht!]
    \caption{Independent Threshold Rounding Scheme}\label{alg:indept}
    \begin{algorithmic}[1]
        \State For each $i\in [k]$, choose independently threshold $\theta_i \in [0,1) $ with probability density $\xi(\theta)$.
        \State Choose a uniformly random permutation $\sigma$ of $[k]$.
        \ForAll{$i\in [k]$}
        \State For any unassigned $u\in V$ with $x^u_{\sigma(i)}\geq \theta_{\sigma(i)}$, assign $u$ to terminal $\sigma(i)$.
        \EndFor
        \State Assign all remaining unassigned vertices to terminal $k+1$
    \end{algorithmic}
\end{algorithm}

\begin{algorithm}[ht!]
    \caption{The Sharma-Vondr\'ak Rounding Scheme}\label{alg:combo}
    \begin{algorithmic}[1]
        \State With probability $p_1$, choose the Kleinberg-Tardos Rounding Scheme (Algorithm \ref{alg:kleinberg}).
        \State With probability $p_2$, choose the Single Threshold Rounding Scheme (Algorithm \ref{alg:single}) with probability density $\phi$.
        \State With probability $p_3$, choose the Descending Threshold Rounding Scheme (Algorithm \ref{alg:desc}), where the thresholds are chosen uniformly in $[0,b]$.
        \State With probability $p_4$, choose the Independent Threshold Rounding Scheme (Algorithm \ref{alg:indept}), where the thresholds are chosen uniformly in $[0,b]$.
    \end{algorithmic}
\end{algorithm}

The following three Lemmas are key to the analysis of Algorithm \ref{alg:combo}. The cut density for an edge of type $(i,j)$ located at $(u_1, u_2,\ldots , u_{k+1}) \in \Delta_{k+1}$ is the limit of the probability that the given threshold scheme assigns $(u_1, u_2,\ldots , u_{k+1})$ and $(u_1+\varepsilon, u_2-\varepsilon,\ldots , u_{k+1})$ to different terminals, normalized by $\varepsilon$ as $\epsilon\to 0$.

\begin{lemma}[Lemma 5.1 in \cite{SharmaVondrak}]
Given a point $(u_1, u_2,\ldots , u_{k+1}) \in \Delta_{k+1}$ and the parameter $b$ of Algorithm \ref{alg:combo}, let $a = \frac{1-u_i-u_j}{b}$ . If $a > 0$, the cut density for an edge of type $(i, j)$, where $i\neq j$ are indices in $[k+1]$ located at $(u_1, u_2,\ldots , u_{k+1})$ under the Independent Thresholds
Rounding Scheme with parameter $b$ is at most
\begin{itemize}
    \item $\frac{2(1-e^{-a})}{ab}-\frac{(u_i+u_j)(1-(1+a)e^{-1})}{a^2b^2}$, if all the coordinates $u_1, u_2,\ldots , u_{k+1}$ are in $[0,b].$
    \item $\frac{(a+e^{-a}-1)}{a^2b}$, if $u_i\in [0,b], u_j\in (b,1]$ and $u_\ell \in [0,b]$ for all other $\ell\in [k]\setminus\{i,j\}$.
    \item $\frac{1}{b}-\frac{(u_i+u_j)}{6b^2}$, if $u_i,u_j\in [0,b]$ and $u_\ell \in (b,1]$ for some other $\ell\in [k]\setminus\{i,j\}$.
    \item $\frac{1}{3b}$, if $u_i\in [0,b], u_j\in (b,1]$ and $u_\ell \in [0,b]$ for some other $\ell\in [k]\setminus\{i,j\}$.
    \item $0$, if $u_i,u_j\in (b,1]$.
\end{itemize}
For $a = 0$, the cut density is given by the limit of the expressions above as $a \to 0$.
\end{lemma}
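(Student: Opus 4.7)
My approach would mirror the analysis of Lemma 5.1 in Sharma--Vondr\'ak~\cite{SharmaVondrak}, adapted to the setting where we work in $\Delta_{k+1}$ but maintain thresholds only for the first $k$ coordinates. The key structural observation is that the $(k+1)$st coordinate acts as a sink that absorbs unassigned vertices without participating in any threshold comparison. Consequently, for an $(i,j)$-axis-aligned edge with $i,j\in[k]$ (which is the relevant case, since the cases where $u_\ell\in[0,b]$ or $u_\ell\in(b,1]$ only reference indices in $[k]\setminus\{i,j\}$), the identity of the "last resort" terminal is immaterial to the cut density, so the cut density formulas from the original analysis carry over.

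For an infinitesimal $(i,j)$-edge with $P_2=P_1+\varepsilon(e^i-e^j)$, I would decompose the cut event to first order in $\varepsilon$ into two symmetric subevents: (A) $\theta_j\in(u_j-\varepsilon,u_j]$ with $j$ earliest active for $P_1$ in $\sigma$, and (B) $\theta_i\in(u_i,u_i+\varepsilon]$ with $i$ earliest active for $P_2$. Events requiring two thresholds to lie in $\varepsilon$-windows contribute $O(\varepsilon^2)$ and vanish after normalization. For each event, I would condition on the set $A\subseteq[k]\setminus\{i,j\}$ of active indices, where $\ell\in A$ occurs with probability $u_\ell/b$ when $u_\ell\leq b$ and with probability $1$ when $u_\ell>b$. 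Given $A$, the probability that $j$ (respectively $i$) is earliest active in the uniformly random permutation $\sigma$ is $1/(|A|+1)$.

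The case analysis then reduces to evaluating $\mathbb{E}[1/(|A|+1)]$ under various regimes and combining with the threshold density $1/b$ and the density $1/b$ arising from the $\varepsilon$-window. Case~1 (all coordinates at most $b$) gives rise to the generating-function-style identity that produces the $(1-e^{-a})/a$ and $(1-(1+a)e^{-a})/a^2$ terms, where $a=(1-u_i-u_j)/b$ is the expected number of other active indices. Case~5 ($u_i,u_j>b$) yields density $0$ because both coordinates are always active and the first-order perturbation cannot change the earliest-active terminal. Cases~2, 3, and~4 interpolate: if some $u_\ell>b$ for $\ell\ne i,j$ then that index is always active, lowering the probability that $i$ or $j$ becomes earliest and capping the density; if $u_j>b$ (or $u_i>b$) then one of the $\varepsilon$-window events becomes impossible, halving the contribution.

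The main obstacle I anticipate is bookkeeping rather than conceptual: carefully confirming that subdivided edges involving the $(k+1)$st dimension are handled consistently, and that the case where the ``alternative assignment'' of $P_1$ or $P_2$ falls to the $(k+1)$st terminal is correctly captured in the above event decomposition. Once that is verified, the closed-form expressions follow from direct integration and the standard identities $\int_0^1 e^{-at}\,dt=(1-e^{-a})/a$ and $\int_0^1 t e^{-at}\,dt=(1-(1+a)e^{-a})/a^2$, exactly as in \cite{SharmaVondrak}. I would therefore present the adaptation by importing the Sharma--Vondr\'ak computation verbatim after establishing the reduction, rather than reproving each of the five bounds from scratch.
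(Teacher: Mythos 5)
Your approach is essentially the one the paper takes: import the Sharma--Vondr\'ak event decomposition and integral identities wholesale, and observe that the $(k{+}1)$st coordinate is a threshold-free ``last resort'' sink so the formulas for $(i,j)$-edges with $i,j\in[k]$ survive unchanged. The step I'd flag is the parenthetical in your first paragraph, where you assert that $i,j\in[k]$ ``is the relevant case'' because the side conditions only reference $\ell\in[k]\setminus\{i,j\}$. That inference doesn't hold: the lemma is explicitly stated for $i\neq j$ in $[k+1]$, so edges of type $(i,k{+}1)$ are in scope and must be bounded, and the side conditions ranging over $[k]\setminus\{i,j\}$ is simply a reflection of the fact that only the first $k$ coordinates carry thresholds, not evidence that $j=k{+}1$ cannot occur. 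You do eventually acknowledge these edges in your last paragraph, but you label the issue ``bookkeeping''; the paper instead resolves it with a one-line comparison argument that is worth stating explicitly: an $(i,k{+}1)$-edge has its cut density dominated by that of an $(i,j)$-edge for any $j\in[k]$, because the $(k{+}1)$st terminal is assigned last and has no threshold, so removing that threshold can only lower the probability that the edge is cut. With that observation made precise, the rest of your plan (conditioning on the active set $A$, the $1/(|A|{+}1)$ probability of being earliest, and the two standard exponential integrals) reproduces the five bounds exactly.
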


\begin{lemma}[Lemma 5.2 in \cite{SharmaVondrak}]
    For an edge of type $(i, j)$ located at $(u_1, u_2,\ldots , u_{k+1})$, where $i\neq j$ are indices in $[k+1]$, the cut density under the Single Threshold
Rounding Scheme is at most
\begin{itemize}
    \item $\frac{1}{2}\phi(u_i) + \phi(u_j)$, if $u_\ell \leq u_i \leq u_j $ for all other $\ell\in [k]\setminus\{i,j\}$.
    \item $\frac{1}{3}\phi(u_i) + \phi(u_j)$, if $u_i < u_\ell \leq u_j $ for some other $\ell\in [k]\setminus\{i,j\}$.
    \item $\frac{1}{2}\phi(u_i) + \phi(u_j)$, if $u_i \leq u_j < u_\ell $ for some other $\ell\in [k]\setminus\{i,j\}$.
\end{itemize}
\end{lemma}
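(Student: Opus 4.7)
The plan is to decompose the cut probability for an axis-aligned edge into contributions from two narrow threshold intervals, then bound each contribution using the uniformity of the permutation. Consider an edge of type $(i,j)$ between $p^1 = (u_1, \ldots, u_{k+1})$ and $p^2$ obtained by replacing $u_i$ with $u_i + \varepsilon$ and $u_j$ with $u_j - \varepsilon$. The first key observation is that whenever the threshold $\theta$ lies outside $(u_i, u_i+\varepsilon] \cup (u_j-\varepsilon, u_j]$, the two endpoints have the same set of coordinates with value $\geq \theta$ (restricted to $[k]$), so the algorithm assigns them to the same terminal. Hence only these two disjoint intervals can contribute to the cut probability, which decomposes as a sum of their contributions.

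Within each interval I would compute the separation probability by tracing the algorithm and averaging over $\sigma$. Fix $\theta \in (u_i, u_i+\varepsilon]$ and let $R \subseteq [k]\setminus\{i\}$ be the set of coordinates with value $\geq \theta$ common to both endpoints. The algorithm assigns $p^2$ to $i$ exactly when $i$ precedes every element of $R$ in $\sigma$, in which case $p^1$ is assigned to the first element of $R$ (or to terminal $k+1$ if $R=\emptyset$) and the edge is cut; otherwise both endpoints are sent to the first-of-$R$ element of $\sigma$ and the edge is not cut. Since $\sigma$ is a uniform random permutation of $[k]$, this probability equals $1/(|R|+1)$. An analogous argument for $\theta \in (u_j-\varepsilon, u_j]$ yields separation probability $1/(|R'|+1)$, where $R' = \{m \in [k]\setminus\{j\} : u_m \geq u_j\}$ in the limit $\theta \to u_j$.

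The three bullets then follow from bounding $|R|$ and $|R'|$ in each regime, integrating $\phi$ over intervals of length $\varepsilon$, and dividing by $\varepsilon$ to obtain cut densities. In the first bullet, only $j$ lies in $R$ and $R'$ is empty generically, giving $\tfrac{1}{2}\phi(u_i) + \phi(u_j)$. In the second bullet, a middle $\ell$ with $u_i < u_\ell \leq u_j$ forces $|R|\geq 2$, improving the coefficient of $\phi(u_i)$ to $1/3$; $R'$ remains generically empty so the $\phi(u_j)$ term is unchanged. In the third bullet, both $|R|\geq 2$ and $|R'|\geq 1$ hold, so the claimed bound $\tfrac{1}{2}\phi(u_i) + \phi(u_j)$ is in fact a loose upper bound rather than tight, which suffices for later use. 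The main delicacy is in handling the measure-zero boundary configurations where $u_\ell$ coincides with $u_i$ or $u_j$; these contribute nothing to the density integral and so the generic case analysis is enough.
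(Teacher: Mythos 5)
Your argument for the core case $i,j\in[k]$ is essentially the Sharma--Vondr\'ak computation that the paper cites rather than reproduces: you decompose into the two narrow threshold intervals near $u_i$ and $u_j$, observe that the cut event within each interval is ``$i$ (resp.\ $j$) comes first in $\sigma$ among itself and the coordinates already $\geq\theta$,'' and conclude that the conditional separation probability is $1/(|R|+1)$. That derivation is correct, and tallying $|R|,|R'|$ across the three regimes reproduces the stated coefficients. So for indices in $[k]$ your proof is a fleshed-out version of exactly what the paper invokes from \cite{SharmaVondrak}.

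The gap is that the lemma, as restated in the paper, covers $i\neq j$ ranging over $[k+1]$, and the case $j=k+1$ (or $i=k+1$) is precisely what the paper adds on top of SV. Your argument silently assumes $j\in[k]$: you write ``only $j$ lies in $R$'' for the first bullet, but when $j=k+1$ the coordinate $k+1$ is never thresholded and hence never belongs to $R\subseteq[k]$. Thus in the first bullet with $j=k+1$ you get $R=\emptyset$, a conditional cut probability of $1$ rather than $1/2$, and a cut density of $\phi(u_i)$ rather than $\tfrac12\phi(u_i)$; your analysis then does not recover the claimed $\tfrac12\phi(u_i)+\phi(u_{k+1})$ bound. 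Similarly, for $i=k+1$ the threshold interval near $u_i$ is vacuous and the contribution from $\theta$ near $u_{k+1}$ is zero, which your argument does not say. The paper handles this entire family of cases in one line: since terminal $k+1$ is assigned last and has no threshold, the cut density of any $(i,k{+}1)$ edge at a given point is at most that of an $(i,j)$ edge at the same point, for any $j\in[k]\setminus\{i\}$, and one then falls back on the already-established $[k]$ bounds. You need to add an explicit statement of that observation (or an equivalent one, e.g.\ that the $\theta$-near-$u_{k+1}$ interval contributes zero and the remaining single-interval contribution is dominated by any two-interval $(i,j)$ density) to make your proof cover the lemma as stated.
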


\begin{lemma}[Lemma 5.3 in \cite{SharmaVondrak}]
    For an edge of type $(i, j)$ located at $(u_1, u_2,\ldots , u_{k+1})$, where $i\neq j$ are indices in $[k+1]$, the cut density under the Descending Thresholds
Rounding Scheme is at most
\begin{itemize}
    \item $(1-\int_{u_i}^{u_j}\psi(u)du)\psi(u_i) + \psi(u_j)$, if $u_\ell \leq u_i \leq u_j $ for all other $\ell\in [k]\setminus\{i,j\}$.
    \item $(1-\int_{u_i}^{u_j}\psi(u)du)((1-\int_{u_i}^{u_\ell}\psi(u)du))\psi(u_i) + \psi(u_j)$, if $u_i < u_\ell \leq u_j $ for some other $\ell\in [k]\setminus\{i,j\}$.
    \item $(1-\int_{u_i}^{u_j}\psi(u)du)(1-\int_{u_i}^{u_\ell}\psi(u)du)\psi(u_i) + (1-\int_{u_j}^{u_\ell}\psi(u)du)\psi(u_j)$, if $u_i \leq u_j < u_\ell $ for some other $\ell\in [k]\setminus\{i,j\}$.
\end{itemize}
\end{lemma}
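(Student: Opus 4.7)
The plan is to follow the proof of the corresponding Lemma~5.3 in~\cite{SharmaVondrak}, extended to the $(k{+}1)$-dimensional simplex used in LIFT-LP. First I would reduce to the setting where every coordinate carries a threshold drawn independently with density $\psi$: in the LIFT-LP scheme the last coordinate $k+1$ has no threshold, but by the standard coupling argument of~\cite[Remark~2]{buchbinder}, adjoining an additional threshold only increases the probability of separating the two endpoints of any edge, so any bound established with $k{+}1$ thresholds remains an upper bound for our modified scheme.

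Fix the edge from $u$ to $u' = u + \varepsilon(e^i - e^j)$ and compute $\Pr[u,u' \text{ assigned differently}]/\varepsilon$ as $\varepsilon \to 0^+$. The two endpoints disagree on their set of satisfied thresholds exactly through two events whose intersection has probability $O(\varepsilon^2)$: Event $A$, that $\theta_i \in (u_i, u_i+\varepsilon]$, of probability $\varepsilon\psi(u_i) + o(\varepsilon)$, under which $u'$ but not $u$ satisfies threshold $i$; and Event $B$, that $\theta_j \in (u_j-\varepsilon, u_j]$, of probability $\varepsilon\psi(u_j) + o(\varepsilon)$, under which $u$ but not $u'$ satisfies $j$. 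Thus, up to $o(\varepsilon)$, the cut density equals $\psi(u_i)\,p_A + \psi(u_j)\,p_B$, where $p_A$ and $p_B$ are the conditional probabilities, given $A$ and $B$ respectively, that the edge is in fact separated.

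Conditional on $A$ and taking $\varepsilon \to 0$, the edge is cut iff $u'$ is assigned to terminal $i$, which happens iff no other $\ell \neq i$ ``beats'' $i$ in the descending-threshold order, i.e., $\theta_\ell \notin (u_i, u_\ell]$ for every $\ell \neq i$ with $u_\ell > u_i$ (using $x^{u'}_\ell \to u_\ell$). Independence of the $\theta_\ell$'s then gives
\[
p_A \;=\; \prod_{\substack{\ell \neq i \\ u_\ell > u_i}} \left(1 - \int_{u_i}^{u_\ell}\psi(t)\,dt\right),
\qquad
p_B \;=\; \prod_{\substack{\ell \neq j \\ u_\ell > u_j}} \left(1 - \int_{u_j}^{u_\ell}\psi(t)\,dt\right).
\]
Plugging in the hypothesis of each case isolates precisely the stated factors: in Case~1 only $\ell = j$ contributes to $p_A$ and $p_B = 1$; in Case~2, $p_A$ picks up an additional factor from the exceptional $\ell$ while $p_B = 1$; in Case~3, $p_A$ has factors from both $j$ and $\ell$, and $p_B$ has the factor from $\ell$.

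The main technical care is in verifying, first, that the joint event $A \cap B$ and higher-order terms in $\varepsilon$ can be absorbed into $o(\varepsilon)$ (routine once independence is used), and second, that the coupling with the $(k{+}1)$-threshold scheme indeed delivers an upper bound when $i$ or $j$ equals $k+1$; this is the situation discussed in \cite[Remark~2]{buchbinder}, and holds because the added threshold on coordinate $k+1$ can only enlarge the probability of any specific separation event.
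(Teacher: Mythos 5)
Your computational core is sound, but the reduction in your opening paragraph goes the wrong way and would not deliver the bound you need. You claim that adjoining a threshold on coordinate $k+1$ only \emph{increases} the probability of separating any edge, so that a bound proved for the $(k{+}1)$-threshold scheme is an upper bound for the modified scheme. This is false for edges of type $(i,j)$ with $i,j\leq k$: both endpoints of such an edge share the same $(k{+}1)$-coordinate, so a freshly added $\theta_{k+1}$ can only \emph{catch both endpoints simultaneously} (merging them into terminal $k+1$) before $\theta_i$ or $\theta_j$ gets a chance to separate them; it can never separate them itself. Hence adjoining $\theta_{k+1}$ weakly \emph{decreases} the cut probability for those edges, and removing it \emph{increases} it. The coupling therefore gives a bound in the wrong direction for exactly the edges you would need it for. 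The monotonicity observation in \cite[Remark~2]{buchbinder}, and the one the paper invokes, is specifically about edges of type $(i,k{+}1)$: there the missing threshold $\theta_{k+1}$ is one of the two that could separate the edge, so dropping it removes a cut event and \emph{lowers} the cut density. You have generalized that observation to all edges, which does not hold.

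The downstream computation also quietly contradicts the premise. If you really were working with $k+1$ thresholds as the first paragraph asserts, the products
\[
p_A = \prod_{\substack{\ell \neq i \\ u_\ell > u_i}} \Bigl(1 - \int_{u_i}^{u_\ell}\psi\Bigr),
\qquad
p_B = \prod_{\substack{\ell \neq j \\ u_\ell > u_j}} \Bigl(1 - \int_{u_j}^{u_\ell}\psi\Bigr)
\]
would have to include a factor for $\ell = k{+}1$ whenever $u_{k+1}>u_i$ (respectively $u_{k+1}>u_j$), and your case-by-case plug-in silently drops those factors; your ``only $\ell=j$ contributes'' in Case~1, for instance, is only correct if $\ell$ ranges over $[k]\setminus\{i\}$, i.e.\ over the coordinates that actually carry thresholds in the modified scheme. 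So the formulas you end up matching to the lemma are really the formulas for the $k$-threshold scheme, not the $(k{+}1)$-threshold scheme you reduced to.

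The fix is to delete the coupling paragraph and analyze the modified scheme directly, which your middle paragraphs essentially already do once you let $\ell$ range over $[k]$: for $i,j\leq k$ the events $A$, $B$, and the resulting $p_A$, $p_B$ coincide exactly with those in the original Sharma--Vondr\'ak analysis (the extra coordinate $k{+}1$ simply plays no role since it carries no threshold), and this yields the three displayed bounds verbatim. For $j=k{+}1$ there is no Event~$B$, the cut density reduces to $\psi(u_i)p_A$ with $p_A$ ranging only over $\ell\in[k]\setminus\{i\}$, and this is where the paper's one-line observation (fewer thresholds to cut an $(i,k{+}1)$-edge means lower cut density) earns its keep. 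That is the route the paper takes, and it avoids the directional error in the coupling.
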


The proof for each of these Lemmas is exactly as in \cite{SharmaVondrak}, save for one additional trivial observation: the cut density of an edge of type $(i,k+1)$ is at most that of an edge of type $(i,j)$ for any $j\neq i,j\neq k+1$. This is because the $(k+1)$st terminal is considered last, and has no threshold of its own, and therefore cannot increase the separation probability. With these Lemmas in hand, Theorem 5.6 of \cite{SharmaVondrak} shows, with a specific choice of parameters, that Algorithm \ref{alg:combo} is a 1.2965-approximation to \lifted as well. 

\end{document}